\title{Equilibrium Behaviors in Repeated Games\thanks{We are grateful to Drew Fudenberg, Guillermo  Ordo\~{n}ez, Matthew Thomas, Linh To, and two anonymous referees for helpful comments. Pei acknowledges the National Science Foundation (Grant SES-1947021) for financial support.}}
\author{Yingkai Li\footnote{Department of Computer Science, Northwestern University. Email: yingkai.li@u.northwestern.edu} \and Harry Pei\footnote{Department of Economics, Northwestern University. Email: harrydp@northwestern.edu}}
\date{\today}
\newcommand{\given}{\,\mid\,}
\newcommand{\prob}[2][]{\text{\bf Pr}\ifthenelse{\not\equal{}{#1}}{_{#1}}{}\!\left[{\def\givenn{\middle|}#2}\right]}
\newcommand{\expect}[2][]{\text{\bf E}\ifthenelse{\not\equal{}{#1}}{_{#1}}{}\!\left[{\def\givenn{\middle|}#2}\right]}
\newcommand{\tparen}{\big}
\newcommand{\tprob}[2][]{\text{\bf Pr}\ifthenelse{\not\equal{}{#1}}{_{#1}}{}\tparen[{\def\given{\tparen|}#2}\tparen]}
\newcommand{\texpect}[2][]{\text{\bf E}\ifthenelse{\not\equal{}{#1}}{_{#1}}{}\tparen[{\def\given{\tparen|}#2}\tparen]}
\newcommand{\sprob}[2][]{\text{\bf Pr}\ifthenelse{\not\equal{}{#1}}{_{#1}}{}[#2]}
\newcommand{\sexpect}[2][]{\text{\bf E}\ifthenelse{\not\equal{}{#1}}{_{#1}}{}[#2]}
\newcommand{\mixaction}{\alpha}
\newcommand{\commitu}{u_1(a^*,b^*)}
\newcommand{\strategy}{\sigma}
\newcommand{\minmaxu}{\underline{\nu}_1}
\newcommand{\event}{{\cal E}}
\newcommand{\deviatea}{a'}
\newcommand{\q}{p}
\newcommand{\abs}[1]{\left\lvert #1\right\rvert}
\begin{document}
\maketitle
\numberwithin{equation}{section}

\begin{abstract}
\normalsize
We examine a patient player's behavior when he can build  reputations in front of a sequence of myopic opponents. With positive probability, the patient player is a commitment type who plays his Stackelberg action in every period. We characterize the patient player's action frequencies in equilibrium. Our results clarify the extent to which reputations can refine the patient player's behavior and provide new insights to entry deterrence, business transactions, and capital taxation. Our proof makes a methodological contribution by establishing a new concentration inequality.\\

\noindent \textbf{Keywords:} reputation, action frequency, behavior, refinement, concentration inequality\\
\noindent \textbf{JEL Codes:} D82, D83

\end{abstract}

\begin{spacing}{1.5}
\newtheorem{Proposition}{\hskip\parindent\bf{Proposition}}
\newtheorem{Theorem}{\hskip\parindent\bf{Theorem}}
\newtheorem{Lemma}{\hskip\parindent\bf{Lemma}}[section]
\newtheorem{Corollary}{\hskip\parindent\bf{Corollary}}[section]
\newtheorem{Definition}{\hskip\parindent\bf{Definition}}
\newtheorem{Assumption}{\hskip\parindent\bf{Assumption}}
\newtheorem{Condition}{\hskip\parindent\bf{Condition}}
\newtheorem{Claim}{\hskip\parindent\bf{Claim}}

\section{Introduction}\label{sec1}
Economists have long recognized that individuals, firms, and governments can benefit from good reputations. As shown in the seminal work of \citet{FL-89}, a patient player can guarantee himself a high payoff when his opponents believe that he might be committed to play a particular action. Their result can be viewed as a refinement, which selects the patient player's optimal equilibria in many games of interest.


This paper studies the effects of reputations on the patient player's behavior instead of his payoffs, which have been underexplored in the reputation literature.
Existing works on reputation-building behaviors restrict attention to particular equilibria or
games with particular payoff functions. By contrast,  
we identify tight bounds on the patient player's action frequencies that apply to all equilibria under more
general payoff functions. Our results clarify the extent to which reputations can refine the patient player's behavior and provide new insights to applications such as 
entry deterrence, business transactions, and capital taxation.

We analyze a repeated game between a patient player and a sequence of myopic opponents. The patient player is either a strategic type who maximizes his discounted average payoff, or a commitment type who plays his optimal pure commitment action (or \textit{Stackelberg action}) in every period. The myopic players cannot observe the patient player's type, but can observe 
all the actions taken in the past.

We examine the extent to which the option to imitate the commitment type can motivate the patient player to play his Stackelberg action. Theorem \ref{Theorem1} characterizes tight bounds on the discounted frequencies with which the strategic-type patient player plays his Stackelberg action in equilibrium. We show that the maximal frequency equals one and the minimal frequency equals the value of the following linear program: Choose a distribution over action profiles in order to minimize the probability of the Stackelberg action subject to two constraints. First, each action profile in the support of this distribution satisfies the myopic player's incentive constraint. Second, the patient player's expected payoff from this distribution is no less than his Stackelberg payoff. 
The first constraint is necessary since the myopic players best reply to the patient player's action in every period. The second constraint is necessary since
the patient player can approximately attain his Stackelberg payoff by imitating the commitment type. In order to provide him an incentive not to play his Stackelberg action, his continuation value after separating from the commitment type must be at least his Stackelberg payoff.

The substantial part is to show that these constraints are not only necessary but also sufficient. 
Our proof is constructive and makes a methodological contribution by establishing a novel concentration inequality on the discounted sum of random variables that 
bounds the patient player's action frequencies (Lemma \ref{lem:concentration}).

Theorem \ref{Theorem2} identifies a sufficient condition under which a distribution of the patient player's actions is his action frequency in some equilibria of the reputation game.
In a number of leading applications such as the product choice game and the entry deterrence game, our sufficient condition is also necessary, in which case Theorem \ref{Theorem2} fully characterizes of the set of action frequencies that can arise in equilibrium.

Our results provide new insights to classic applications of reputation models. For example,
in the product choice game of Mailath and Samuelson (2006, Figure 15.1.1 on page 460),\footnote{In \cite{MS-06}'s product choice game, a patient firm faces a sequence of consumers. In every period, the firm chooses between high effort and low effort, and a consumer chooses between buying a high-end product and a low-end product. The firm finds it costly to exert high effort and prefers the consumers to purchase high-end products. Each consumer has an incentive to buy the high-end product only when she believes that the firm will exert high effort with high enough probability. In this game, high effort is the firm's Stackelberg action but low effort is the dominant action in the stage game.} 
our results imply that a policy maker can increase the frequency of high effort by subsidizing consumers for purchasing low-end products or by taxing consumers for purchasing high-end products. Intuitively, these policies increase the consumers' demand for high effort when they purchase the high-end product, which in turn increases the frequency of high effort in the worst equilibrium.
In the entry deterrence game of \cite{KW-82} and \cite{MR-82}, our results imply that a small amount of subsidy to potential entrants for entering the market makes a reputation-building incumbent more aggressive in fighting entry, but a large amount of subsidy eliminates the incumbent's fighting incentives.

Our results contribute to the reputation literature by clarifying the role of reputations in refining the patient player's behavior. This is complementary to the result of \cite{FL-89} that studies how reputations refine the patient player's payoff. 
Existing works on players' reputation-building behaviors restrict attention to particular equilibria or particular payoff functions. 
For example, \citet{KW-82} and \citet{MR-82}
characterize sequential equilibria
in entry deterrence games. 
\citet{Sch-93} characterizes Markov equilibria in repeated bargaining games. \cite{Bar-03},
\citet{Phe-06}, \citet{Ekm-11}, \citet{Liu-11}, and \citet{LS-14}
restrict attention to supermodular games or $2 \times 2$ games. By contrast,
we characterize tight bounds on the patient player's action frequencies that apply to all equilibria. Our results are more general in terms of payoffs, which only require the patient player's optimal commitment payoff to be greater than his minmax value and that his optimal commitment outcome is not a stage-game Nash equilibrium.


\citet*{CMS-04} show that when the monitoring structure has full support, the myopic players eventually learn the patient player's type and the strategies converge to an equilibrium of the repeated complete information game. However, their results do not characterize the speed of convergence or players' behaviors in finite time, and hence do not imply what players' discounted action frequencies are. 
\citet{EM-19} study players' reputation-building behaviors in stopping games where a patient uninformed player chooses  between continuing and irreversibly stopping the game in every period. By contrast, the uninformed players in our model are myopic and their action choices are reversible.
\citet{Pei2020} provides sufficient conditions under which the patient player has a unique on-path behavior. Unlike our model that restricts attention to private value environments but allows for general stage-game payoffs, his result requires nontrivial interdependent values and monotone-supermodular stage-game payoffs.

Section \ref{sec2} sets up the baseline model. Section \ref{sec3} states our main results. Section \ref{sec4} applies our results to several applied models of reputation formation and discusses the results' practical implications. Section \ref{sec6} discusses our modeling assumptions as well as issues related to taking our predictions to the data. Section \ref{sec7} concludes. The proofs of our results can be found in the appendix.  

\section{Model}\label{sec2}
Time is discrete, indexed by $t=0,1,2,...$. A patient player $1$ with discount factor $\delta \in (0,1)$ interacts with an infinite sequence of myopic player $2$s, arriving one in each period and each playing the game only once.
In period~$t$, a public randomization device $\xi_t \sim U[0,1]$ is realized and is observed by both players, after which players simultaneously choose their actions. Player $1$'s action is denoted by $a_t \in A$. Player $2$'s action is denoted by $b_t \in B$. Their stage-game payoffs are $u_1(a_t,b_t)$ and $u_2(a_t,b_t)$. We assume $A$ and $B$ are finite, with $|A|,|B|\geq 2$.

Let $\textrm{BR}_1: \Delta (B) \rightrightarrows 2^{A} \backslash \{\varnothing\}$ and $\textrm{BR}_2: \Delta (A) \rightrightarrows 2^{B} \backslash \{\varnothing\}$ be player $1$'s and player $2$'s best reply correspondences in the stage-game. The set of player $1$'s (pure) Stackelberg actions is
$\arg\max_{a \in A} \{ \min_{b \in \textrm{BR}_2(a)} u_1 (a,b) \}$.
\begin{Assumption}\label{Ass1}
Player $1$ has a unique Stackelberg action, denoted by $a^*$. 
Player $2$ has a unique best reply to player $1$'s Stackelberg action, denoted by $b^*$. 
\end{Assumption}
 Assumption \ref{Ass1} is satisfied 
when each player has a strict best reply to each of his opponent's pure actions and player $1$ is not indifferent between any pair of pure action profiles, both of which are satisfied for generic $(u_1,u_2)$ since $A$ and $B$ are finite sets. Player $1$'s \textit{Stackelberg payoff} is $u_1(a^*,b^*)$. 
Let
\begin{equation*}
    \mathcal{B} \equiv \{\beta \in \Delta(B) | \exists \alpha \in \Delta(A) \textrm{ s.t. } \textrm{supp}(\beta) \subset \textrm{BR}_2(\alpha)\} \subset \Delta (B).
\end{equation*}
Since player $2$s are myopic,
they will never take actions that do not belong to
$\mathcal{B}$. As a result, player $1$'s minmax value is $\underline{v}_1 \equiv  \min_{\beta \in  \mathcal{B}} \max_{a \in A} u_1(a, \beta)$.
\begin{Assumption}\label{Ass2}
$a^* \notin \textrm{BR}_1(b^*)$ and $u_1(a^*,b^*) > \underline{v}_1$.
\end{Assumption}
Assumptions \ref{Ass1} and \ref{Ass2} are satisfied in many leading applications of reputation models. For example, 
\begin{enumerate}
    \item In the product choice game of \citet{MS-06}, a firm benefits from committing to exert high effort since it can encourage consumers to purchase the high-end product or to purchase larger quantities.
    However, the firm can save costs by lowering its effort.
    \item In the entry deterrence game of \citet{KW-82} and \citet{MR-82}, and the limit pricing game of \cite{MR-82ECMA}, an incumbent firm  benefits from committing to set low prices and to fight potential entrants, but its stage-game payoff is higher when it accommodates entry. 
        \item In the fiscal policy game of \cite{Phe-06}, the government benefits from committing to low tax rates in order to encourage investments, but it is tempted to expropriate the citizens after investment takes place. 
    \item In the monetary policy game of \citet{Bar-86}, the central bank can benefit from committing to low inflation rates. But given the households' expectations about inflation, the central bank is tempted to raise inflation in order to boost economic activities.
\end{enumerate}

Assumption \ref{Ass2} rules out coordination games (such as the battle of sexes), common interest games, and chicken games, in which $a^*$ best replies to $b^*$, and zero-sum games in which $u_1(a^*,b^*)  \leq \underline{v}_1$.  
Section \ref{sec6} discusses games that violate this assumption, and the role of Assumption \ref{Ass2} in our proofs is explained in Appendix \ref{sec5}.

Player $1$ has perfectly persistent private information about his type $\omega$. Let $\omega \in \{\omega^s, \omega^c\}$, where $\omega^c$ stands for a \textit{commitment type} who mechanically plays $a^*$ in every period,
and $\omega^s$ stands for a \textit{strategic type} who  can flexibly choose his actions in order to maximize his discounted average payoff $\sum_{t=0}^{+\infty} (1-\delta)\delta^t u_1(a_t,b_t)$.
Player $2$'s prior belief attaches probability $\pi \in (0,1)$ to the commitment type.

Players' past actions are perfectly monitored.
A typical public history is denoted by $h^t \equiv \{a_s,b_s,\xi_s\}_{s=0}^{t-1}$. Let $\mathcal{H}^t$ be the set of $h^t$ and let
$\mathcal{H} \equiv \cup_{t \in \mathbb{N}} \mathcal{H}^t$. Strategic-type player $1$'s strategy is $\sigma_1: \mathcal{H} \rightarrow \Delta (A)$. Player $2$'s strategy is $\sigma_2 : \mathcal{H} \rightarrow \Delta (B)$.
Let $\Sigma_1$ and $\Sigma_2$ be the set of player $1$'s and player $2$'s strategies, respectively.

The solution concept is (Bayes) Nash equilibrium. Let $\textrm{NE}(\delta,\pi) \subset \Sigma_1 \times \Sigma_2$ be the set of equilibria. Since the stage game is finite and 
payoffs are discounted, an equilibrium exists \citep{fl83}. 

\paragraph{Existing Result on Equilibrium Payoffs:}  \citet{FL-89} show that for every $\pi \in (0,1)$ and $\varepsilon>0$, there exists $\underline{\delta} \in (0,1)$ such that
\begin{equation}\label{2.1}
    \inf_{(\sigma_1,\sigma_2) \in \textrm{NE}(\delta,\pi)} \mathbb{E}^{(\sigma_1,\sigma_2)}\Big[
    \sum_{t=0}^{+\infty} (1-\delta) \delta^t u_1(a_t,b_t)
    \Big] \geq u_1(a^*,b^*)- \varepsilon \textrm{ for every } \delta > \underline{\delta},
\end{equation}
where $\mathbb{E}^{(\sigma_1,\sigma_2)}[\cdot]$ is the
expectation when player $1$'s strategy is $\sigma_1$ 
and player $2$'s strategy is $\sigma_2$.

Inequality (\ref{2.1}) unveils the effects of reputations on the patient player's payoff. \cite{FL-89} view this result as a refinement, which selects among the plethora of equilibria in repeated complete information games.
According to the folk theorem of
\citet*{FKM-90}, the patient player can attain any payoff between $\underline{v}_1$ and
$\overline{v}_1 \equiv \max_{ \{ (\alpha,\beta) |  \textrm{supp}(\beta) \subset \textrm{BR}_2(\alpha) \}} \min_{a \in \textrm{supp}(\alpha)} u_1(a,\beta)$ in a repeated complete information game without any commitment type.
By definition, $\overline{v}_1 \geq u_1(a^*,b^*)$, which implies that introducing a commitment type selects equilibria in which player $1$'s payoff is between $u_1(a^*,b^*)$ and $\overline{v}_1$. In the entry deterrence game, product choice game, and fiscal and monetary policy games, 
$\overline{v}_1$ equals $u_1(a^*,b^*)$, in which case the reputation model selects equilibria where the patient player receives his highest equilibrium payoff.

\section{Results}\label{sec3}
Our results examine the \textit{discounted frequencies} of the patient player's actions. Formally,
the discounted frequency of  action $a \in A$ under  $(\sigma_1,\sigma_2)$ is
\begin{equation}\label{3.1}
G^{(\sigma_1,\sigma_2)}(a) \equiv  \mathbb{E}^{(\sigma_1,\sigma_2)} \Big[
    \sum_{t=0}^{\infty} (1-\delta)\delta^t \mathbf{1}\{a_{t}=a\}
    \Big].
\end{equation}
Our first result characterizes the discounted frequencies with which
the patient player plays his Stackelberg action $a^*$. Let
\begin{equation}\label{3.2}
    \Gamma \equiv \Big\{  (\alpha,b) \in \Delta (A) \times B \Big| b \in \textrm{BR}_2(\alpha) \Big\}
\end{equation}
be the set of \textit{incentive compatible action profiles}.
Let
\begin{equation}\label{3.3}
    F^* (u_1,u_2) \equiv \min_{(\alpha_1,\alpha_2,b_1,b_2,q) \in \Delta (A) \times \Delta (A) \times B \times B \times [0,1] }
  \Big\{   q \alpha_1(a^*) +(1-q) \alpha_2(a^*) \Big\},
\end{equation}
subject to
\begin{equation}\label{3.4}
(\alpha_1,b_1) \in \Gamma,\quad (\alpha_2,b_2) \in \Gamma,
\end{equation}
and
\begin{equation}\label{3.5}
    q u_1(\alpha_1,b_1) +(1-q) u_1(\alpha_2,b_2) \geq u_1(a^*,b^*),
\end{equation}
where $\alpha_i(a)$ stands for the probability
of action $a \in A$ in $\alpha_i \in \Delta (A)$.
\begin{Theorem}\label{Theorem1}
Suppose $(u_1,u_2)$ satisfies Assumptions 1 and 2. 
\begin{enumerate}
    \item For every $f \in [F^*(u_1,u_2),1]$ and $\varepsilon>0$, there exists $\underline{\delta} \in (0,1)$ such that for every $\delta  > \underline{\delta}$, there exists $(\sigma_1,\sigma_2) \in \textrm{NE}(\delta,\pi)$ such that $G^{(\sigma_1,\sigma_2)}(a^*) \in (f-\varepsilon,f+\varepsilon)$.
     \item For every $\widehat{f}< F^*(u_1,u_2)$, there exist $\underline{\delta} \in (0,1)$ and $\eta>0$ such that $G^{(\sigma_1,\sigma_2)}(a^*) > \widehat{f}+\eta$
    for every $\delta > \underline{\delta}$ and $(\sigma_1,\sigma_2) \in \textrm{NE}(\delta,\pi)$.
\end{enumerate}
\end{Theorem}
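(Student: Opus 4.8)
The plan is to prove the two parts separately. Part~2 (the lower bound) is the ``necessity'' direction, obtained by feeding Fudenberg--Levine's payoff bound (2.1) into the linear program (3.3)--(3.5). Part~1 (the whole range $[F^*,1]$) is the ``sufficiency'' direction and carries the real weight; I would first reduce it, using the time-$0$ public randomization $\xi_0$ to select between two equilibria, to constructing just two equilibria: one with $G^{(\sigma_1,\sigma_2)}(a^*)\approx 1$ and one with $G^{(\sigma_1,\sigma_2)}(a^*)\approx F^*(u_1,u_2)$. Because conditioning continuation play on $\xi_0$ preserves the Nash property, the set of achievable values of $G(a^*)$ is an interval, so every $f\in[F^*(u_1,u_2),1]$ is then hit (up to $\varepsilon$) by the appropriate mixture. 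The $G(a^*)\approx 1$ equilibrium is the easy endpoint: let the strategic type imitate the commitment type forever, let player~$2$ play $b^*$ on that path, and deter any deviation (which reveals rationality) with a complete-information equilibrium giving player~$1$ close to $\underline{v}_1$, which exists by the folk theorem of \citet{FKM-90}; since $u_1(a^*,b^*)>\underline{v}_1$ by Assumption~\ref{Ass2}, no deviation is profitable for $\delta$ large and $G(a^*)=1$ exactly.

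For Part~2, fix an equilibrium and work with the \emph{belief-weighted} discounted occupation measure of the strategic type. At every on-path history $h^t$ let $\bar\alpha_t\in\Delta(A)$ be player~$2$'s Bayesian belief about $a_t$, so that $(\bar\alpha_t,b)\in\Gamma$ for every $b\in\supp(\sigma_2(h^t))$, and set $\bar\mu\equiv\mathbf E^{(\sigma_1,\sigma_2)}[\sum_t(1-\delta)\delta^t\,\bar\alpha_t\otimes\sigma_2(h^t)]$; by construction $\bar\mu$ is a convex combination of profiles in $\Gamma$. The key estimate is that $\bar\mu$ differs from the true occupation measure only through the inflation $\bar\alpha_t-\sigma_1(h^t)=\pi_t(\delta_{a^*}-\sigma_1(h^t))$, which is nonzero only while player~$2$ has not yet observed an action $\neq a^*$; writing $p_t=\sigma_1(h^t)(a^*)$ on such histories and $q_t$ for the probability of not having separated by $t$, the telescoping identity $\mathbf E[\mathbf 1\{\text{not separated at }t\}(1-p_t)]=q_t-q_{t+1}$ gives $\mathbf E[\sum_t(1-\delta)\delta^t\mathbf 1\{\text{not separated}\}(1-p_t)]\le 1-\delta$. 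Hence both $|\,\bar\mu(\{a^*\}\times B)-G(a^*)\,|$ and $|\,\mathbf E_{\bar\mu}[u_1]-\mathbf E^{(\sigma_1,\sigma_2)}[\sum_t(1-\delta)\delta^t u_1(a_t,b_t)]\,|$ are $O(1-\delta)$. Plugging in (2.1) gives $\mathbf E_{\bar\mu}[u_1]\ge u_1(a^*,b^*)-\varepsilon-O(1-\delta)$, so $\bar\mu$ exhibits a point of $\textrm{co}(\Gamma)$ with $a^*$-probability $\bar\mu(\{a^*\}\times B)$ and payoff essentially at least $u_1(a^*,b^*)$; since the LP value is continuous in the right-hand side of (3.5) (compactness of $\textrm{co}(\Gamma)$), and a Carathéodory argument in the two-dimensional (probability-of-$a^*$, payoff) plane reduces any such point to the two-profile form of (3.3), we conclude $G(a^*)\ge F^*(u_1,u_2)-\eta$ with $\eta\to 0$ as $\varepsilon\to0$, $\delta\to1$; choosing $\varepsilon$ small and $\delta$ close to $1$ delivers $G(a^*)>\widehat f+\eta$.

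The substantial step is the $G(a^*)\approx F^*$ equilibrium. Take an optimal $(\alpha_1,b_1,\alpha_2,b_2,q)$ for (3.3)--(3.5), which we may take to make (3.5) bind (if it is slack the argument only simplifies). Once player~$1$ has revealed rationality the posterior is $0$ and we are in a complete-information repeated game; there, public-randomize between a $q$-block and a $(1-q)$-block and use Fudenberg--Levine--Maskin-style continuation rewards to enforce the mixed actions $\alpha_1,\alpha_2$, producing an equilibrium whose long-run action distribution is $q(\alpha_1,b_1)+(1-q)(\alpha_2,b_2)$, hence with discounted $a^*$-frequency $\to q\alpha_1(a^*)+(1-q)\alpha_2(a^*)=F^*$ and player-$1$ value $\to u_1(a^*,b^*)$. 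Before revelation, the strategic type plays $a^*$ with a history-dependent probability strictly below one, chosen so that player~$2$'s posterior moves in a way that makes her best reply consistent with keeping the strategic type indifferent between continuing to pool and revealing, with the post-revelation continuation serving as the reward for revealing and the worst complete-information equilibrium deterring all other deviations (legitimate since $u_1(a^*,b^*)>\underline{v}_1$ and $a^*\notin\textrm{BR}_1(b^*)$, Assumption~\ref{Ass2}). Revelation occurs quickly enough that the discounted mass of the pooling phase vanishes as $\delta\to1$, so $G(a^*)\to F^*$, and combining this with the $f=1$ equilibrium via $\xi_0$ covers all of $(F^*(u_1,u_2),1)$ up to $\varepsilon$.

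The main obstacle is in that last construction: one must verify \emph{simultaneously} that the promised continuation values stay inside the feasible equilibrium-value set $[\underline{v}_1,\overline{v}_1]$ despite the $O(1-\delta)$ corrections needed for exact indifference, and that the realized discounted frequency of $a^*$ is pinned down near its target \emph{along every history}, not merely in expectation. Both are controlled by the new concentration inequality on discounted sums of random variables (Lemma~\ref{lem:concentration}), which is the genuinely novel ingredient; I expect essentially all the difficulty of the proof to be concentrated there and in choosing the pooling probabilities and player~$2$'s responses so that every incentive constraint holds at once.
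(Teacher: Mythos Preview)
Your reduction of Part~1 via $\xi_0$ to the two endpoints $f=1$ and $f=F^*$ is a legitimate shortcut (the paper instead proves the stronger Theorem~\ref{Theorem2} directly), and your $f=1$ equilibrium and your Part~2 argument are both correct. For Part~2 your belief-weighted occupation-measure approach is in fact different from, and arguably cleaner than, the paper's route through Lemma~\ref{LA.1}, which conditions on the first separation period and bounds the post-separation continuation value and frequency separately; both routes rely on the same auxiliary facts you invoke (continuity of the LP value in the payoff constraint and the Carath\'eodory reduction to two profiles, Lemmas~\ref{LA.2} and~\ref{LA.3}).

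The genuine gap is your post-revelation construction for the $f=F^*$ endpoint. Invoking ``FLM-style continuation rewards to enforce the mixed actions $\alpha_1,\alpha_2$'' does not control discounted action frequencies: making player~1 indifferent across $\supp(\alpha_i)$ requires action-contingent continuation play, and that continuation play carries its own action frequencies which enter $G(a^*)$ with weight that does \emph{not} vanish as $\delta\to 1$; the FKM construction pins down continuation \emph{payoffs}, not continuation \emph{frequencies}. The paper's construction is not of this type. It organizes play into blocks, each of which (i) opens with $T_1=O(1)$ periods of $(\alpha',b^*)$ to build a payoff buffer above $u_1(a^*,b^*)$, (ii) enters an \emph{absorbing subphase} in which the target distribution is played until the running discounted payoff exits a band of width $O(1-\delta)$, and (iii) closes with a \emph{compensation subphase} of bounded expected length playing $(a',b')$ that resets the cumulative discounted payoff exactly to $(1-\delta^{T+1})u_1(a^*,b^*)$. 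Player~1's indifference across all on-path actions is obtained not through action-contingent continuation rewards but because every on-path realization yields the same block payoff by construction. Lemma~\ref{lem:concentration} is then applied to the running payoff process in the absorbing subphase to show that this subphase lasts $\Theta(1/(1-\delta))$ periods with probability bounded away from zero, so that the $O(1)$-length buffer and compensation subphases carry vanishing discounted weight and $G(a^*)$ is driven to the target. Your proposal names Lemma~\ref{lem:concentration} as the key tool but supplies no construction to which it can be applied; the block architecture with its payoff-resetting compensation is the missing idea.
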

Theorem \ref{Theorem1} implies that when player $1$ is patient, the discounted frequency with which he plays $a^*$ can take any value between 
$F^*(u_1,u_2)$ and $1$, but it cannot be strictly lower than $F^*(u_1,u_2)$.
Therefore, $[F^*(u_1,u_2),1]$
is the set of frequencies with which $a^*$ can arise in equilibrium. 
Our result applies to every prior belief $\pi \in (0,1)$, which includes but not limited to situations where the probability of the commitment type is small.
Since  $F^*(u_1,u_2)<1$ under Assumption \ref{Ass2}, Theorem \ref{Theorem1} implies 
that an arbitrarily patient player can play his Stackelberg action with frequency bounded away from one despite having the option to build a reputation.

The upper bound on the frequency of $a^*$ is $1$
since there exists an equilibrium where player $1$ plays $a^*$ and player $2$s play $b^*$. Once player $1$ plays any action other than $a^*$, future player $2$s can observe this deviation after which they can punish player $1$ by driving his continuation value to his minmax payoff $\underline{v}_1$. Such a punishment is feasible since player $1$ separates from the commitment type after any deviation from his equilibrium strategy, and according to \citet*{FKM-90}, there exists an equilibrium of the repeated complete information game in which player $1$'s payoff is $\underline{v}_1$. Since Assumption \ref{Ass2} requires that $u_1(a^*,b^*)>\underline{v}_1$, this punishment provides player $1$ an incentive to play $a^*$ when his discount factor $\delta$ is large enough.

For some intuition on the linear program that defines 
the lower bound $F^*(u_1,u_2)$, consider a static planning problem in which a planner commits to a mixed action $\alpha \in \Delta (A)$ on behalf of player~$1$ after which player $2$ best replies to $\alpha$. Suppose the planner faces a constraint that player $1$'s expected payoff is no less than $u_1(a^*,b^*)$, then by definition, $F^*(u_1,u_2)$ is the lowest probability with which $a^*$ needs to be played.\footnote{The planner in the planning problem can randomize between any number of commitment actions, while in the linear program that defines $F^*(u_1,u_2)$, he can randomize between at most two commitment actions. Lemmas \ref{LA.2} and \ref{LA.3} show that this is without loss and the value of $F^*(u_1,u_2)$ remains the same even when the planner can randomize between any arbitrary number of commitment actions.}

We map the two constraints in the planning problem to the reputation game studied by Theorem \ref{Theorem1}. First, since player $2$s are myopic, they play a best reply to $\alpha$ after they learn that the patient player will play $\alpha$. This explains the necessity of constraint (\ref{3.4}). Second, the presence of commitment type implies that the patient player can guarantee payoff approximately $u_1(a^*,b^*)$ by playing $a^*$ in every period. Therefore, the patient player has an incentive to play $\alpha_1$ with probability $q$ and $\alpha_2$ with probability $1-q$ only when his expected payoff from doing so is at least 
$u_1(a^*,b^*)$. This explains the necessity of constraint (\ref{3.5}).
The substantial part of our result is to show that constraints (\ref{3.4}) and (\ref{3.5}) are not only necessary but are also sufficient.

Our second result examines the set of discounted action frequencies that can arise in equilibrium.
Let
\begin{equation}\label{eq:A}
    \mathcal{A} \equiv \Big\{
    \alpha^* \in \Delta(A)
    \Big|
    \exists q \in \Delta (\Gamma) \textrm{ such that }
\alpha^*    = \int_{\alpha} \alpha d q
\textrm{ and } \int_{(\alpha,b)} u_1(\alpha,b) d q = u_1(a^*,b^*)
    \Big\},
\end{equation}
which is the set of marginal distributions of player $1$'s actions such that
one can find a distribution of incentive compatible action profiles $q \in \Delta (\Gamma)$ from which  player $1$'s expected payoff equals his Stackelberg payoff.
\begin{Theorem}\label{Theorem2}
Suppose $(u_1,u_2)$ satisfies Assumptions \ref{Ass1} and \ref{Ass2}.
\begin{enumerate}
    \item For every $\alpha^* \in \mathcal{A}$ and $\varepsilon>0$, there exists $\underline{\delta} \in (0,1)$ such that for every $\delta > \underline{\delta}$, there exists $(\sigma_1,\sigma_2) \in \textrm{NE}(\delta,\pi)$ such that $\Big| G^{(\sigma_1,\sigma_2)}(a)- \alpha^* (a) \Big| < \varepsilon$ for every $a \in A$.\footnote{We can also show that if $\delta$ is large enough and $\mathcal{A}$ satisfies a full dimensionality assumption, then every $\alpha^*$ that belongs to the interior of $\mathcal{A}$ can be exactly attained as the discounted action frequency of some equilibria.}
\item In games where $u_1(a^*,b^*)=\overline{v}_1$. For every $\widehat{\alpha} \notin \mathcal{A}$, there exist $\eta>0$ and $\underline{\delta} \in (0,1)$ such that for every $\delta > \underline{\delta}$ and 
$(\sigma_1,\sigma_2) \in \textrm{NE}(\delta,\pi)$, $ \Big|  G^{(\sigma_1,\sigma_2)}(a)- \widehat{\alpha} (a) \Big| > \eta$ for some $a \in A$.
\end{enumerate}
\end{Theorem}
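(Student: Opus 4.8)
The proof splits into a construction (part 1) and a limiting argument (part 2), in parallel with Theorem \ref{Theorem1}.

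\textit{Part 1.} Fix $\alpha^*\in\mathcal A$ and $\varepsilon>0$. Since $\big(\alpha^*,u_1(a^*,b^*)\big)$ lies in the convex hull of $\{(\alpha,u_1(\alpha,b)):(\alpha,b)\in\Gamma\}\subset\mathbb R^{|A|+1}$, Carathéodory's theorem (together with the reductions behind Lemmas \ref{LA.2} and \ref{LA.3}) yields a finitely supported $q=\sum_{k=1}^K q_k\delta_{(\alpha_k,b_k)}\in\Delta(\Gamma)$ with $\sum_k q_k\alpha_k=\alpha^*$ and $\sum_k q_k u_1(\alpha_k,b_k)=u_1(a^*,b^*)$. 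I would build the equilibrium in two stages. In a \emph{revelation stage} the strategic type plays $(1-\gamma)\delta_{a^*}+\gamma\delta_{a'}$ each period, where $\gamma>0$ is small and $a'\in\textrm{BR}_1(b^*)$; Assumption \ref{Ass2} ensures $a'\neq a^*$ and $u_1(a',b^*)>u_1(a^*,b^*)$. As long as only $a^*$ has been observed, player $2$'s belief stays within $\gamma$ of $\delta_{a^*}$, so by upper hemicontinuity of $\textrm{BR}_2$ and Assumption \ref{Ass1} she plays $b^*$ and the strategic type's realized flow $(1-\gamma)u_1(a^*,b^*)+\gamma u_1(a',b^*)$ strictly exceeds $u_1(a^*,b^*)$. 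Once $a'$ is realized the strategic type is revealed, and play enters a complete‑information continuation implementing the lottery $q$: each period a fresh public signal selects $k$ with probability $q_k$, player $1$ plays $\alpha_k$, player $2$ plays $b_k$, and deviations trigger \citet{FKM-90} punishments toward $\underline v_1$. Player $1$'s within‑period indifference over $\textrm{supp}(\alpha_k)$ is arranged by continuation‑value adjustments of order $1-\delta$; since these adjustments are mean zero, Lemma \ref{lem:concentration} keeps the accumulated value within $O(\sqrt{1-\delta})$ of a target value within $O(\sqrt{1-\delta})$ of $u_1(a^*,b^*)$, hence inside the feasible payoff set along essentially every path. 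The revelation stage lasts a geometrically distributed time with parameter $\gamma$ independent of $\delta$, so it carries discounted weight $O((1-\delta)/\gamma)\to0$, while the continuation has discounted action frequency exactly $\sum_k q_k\alpha_k=\alpha^*$; hence $G^{(\sigma_1,\sigma_2)}(a)\to\alpha^*(a)$ for every $a$. The one deviation to check is imitating the commitment type forever, which yields at most $u_1(a^*,b^*)$ (player $2$ eventually becomes convinced and plays $b^*$), no more than the on‑path value because the revelation‑stage bonus from Assumption \ref{Ass2} keeps the on‑path value at least $u_1(a^*,b^*)$; player $2$ best responds by construction, and the slack $u_1(a^*,b^*)>\underline v_1$ sustains the off‑path punishments.

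\textit{Part 2.} Suppose for contradiction that for some $\widehat\alpha\notin\mathcal A$ there are $\delta_n\to1$ and $(\sigma_1^n,\sigma_2^n)\in\textrm{NE}(\delta_n,\pi)$ with $\max_a|G^{(\sigma_1^n,\sigma_2^n)}(a)-\widehat\alpha(a)|\to0$. Write $\alpha^s_{n,t}$ for the strategic type's period‑$t$ action and $\beta_{n,t}$ for player $2$'s action, and let $\rho_n$ be the discounted occupation measure of $(\alpha^s_{n,t},\beta_{n,t})$ conditional on the strategic type; then $\int\alpha\,d\rho_n=G^{(\sigma_1^n,\sigma_2^n)}\to\widehat\alpha$ and $\int u_1(\alpha,\beta)\,d\rho_n$ is player $1$'s equilibrium payoff. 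Two claims finish the argument. \emph{(a) Player $1$'s payoff converges to $u_1(a^*,b^*)$}: the lower bound is inequality \eqref{2.1}; for the upper bound, let $W^*$ be the supremum of player $1$'s continuation value over all equilibria and histories of the $\delta_n$‑game and take a history with value near $W^*$. Averaging player $1$'s one‑shot optimality conditions over $\textrm{supp}(\alpha^s_{n,t})$, and using that playing any $a\neq a^*$ reveals the strategic type so its continuation is at most $\overline v_1$ while playing $a^*$ has continuation at most $W^*$, shows $W^*$ is at most a convex combination of $u_1(\alpha^s_{n,t},\beta_{n,t})$ and $\overline v_1$; since $\alpha^s_{n,t}$ near $\delta_{a^*}$ forces $\beta_{n,t}=\delta_{b^*}$ and hence $u_1(\alpha^s_{n,t},\beta_{n,t})$ near $u_1(a^*,b^*)$, the hypothesis $\overline v_1=u_1(a^*,b^*)$ gives $W^*\le u_1(a^*,b^*)+o(1)$. \emph{(b) Any weak‑$*$ limit $\rho$ of $(\rho_n)$ is (in payoff/marginal terms) supported on $\Gamma$}: whenever player $2$'s posterior on the commitment type is below a threshold $\underline\pi$, her belief lies within $\underline\pi$ of $\alpha^s_{n,t}$, so $(\alpha^s_{n,t},\beta_{n,t})$ is approximately incentive compatible with error vanishing as $\underline\pi\to0$; the discounted mass of periods where the posterior exceeds $\underline\pi$ while $\alpha^s_{n,t}$ puts non‑negligible weight off $a^*$ is $O((1-\delta)/\underline\pi^2)$, because conditional on the strategic type the posterior is a non‑negative supermartingale with total expected decrease at most $\pi$ and each such period contributes at least order $\underline\pi^2$ to that decrease; the residual high‑posterior mass concentrates near $(\delta_{a^*},b^*)\in\Gamma$ since there player $2$'s belief is near $\delta_{a^*}$ and $\textrm{BR}_2(a^*)=\{b^*\}$. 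Sending $\underline\pi\to0$ as $\delta_n\to1$ at compatible rates, $\rho$ reduces to some $q\in\Delta(\Gamma)$ with $\int\alpha\,dq=\widehat\alpha$ and $\int u_1\,dq=u_1(a^*,b^*)$, so $\widehat\alpha\in\mathcal A$ — a contradiction. The quantitative $(\eta,\underline\delta)$ form then follows because $\mathcal A$ is closed.

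\textit{Where the difficulty lies.} The crux is claim (a) in part 2: player $1$ can transiently ``milk'' a reputation, earning more than $u_1(a^*,b^*)$ while player $2$'s belief is still distorted toward $a^*$, and the entire role of the hypothesis $\overline v_1=u_1(a^*,b^*)$ is that such milking necessarily burns reputation so its discounted value vanishes; converting this intuition into the clean bound $W^*\le u_1(a^*,b^*)+o(1)$ requires combining the supermartingale accounting in (b) with the sup‑history incentive argument. The secondary difficulty, in part 1, is keeping the folk‑theorem continuation's value promises to player $1$ feasible along all realized paths, which is precisely what the new concentration inequality (Lemma \ref{lem:concentration}) is designed to deliver.
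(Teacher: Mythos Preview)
Your Part 2 is in good shape and in places goes further than the paper: the paper simply \emph{asserts} that $\overline v_1=u_1(a^*,b^*)$ forces player~1's equilibrium payoff to be at most $u_1(a^*,b^*)+\varepsilon$ and then appeals to (the proof of) Lemma~\ref{LA.1}, whereas you actually supply an argument for the upper bound and a supermartingale accounting for the periods where $(\alpha^s_{n,t},\beta_{n,t})\notin\Gamma$. One small hole in claim~(a): your convex-combination bound $W^*\le \lambda\, u_1(\alpha^s_{n,t},\beta_{n,t})+(1-\lambda)\,\overline v_1$ with $\lambda=(1-\delta)/\bigl(1-\delta\,\alpha^s_{n,t}(a^*)\bigr)$ is correct, but you only dispatch the case $\alpha^s_{n,t}\approx \delta_{a^*}$. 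When $\alpha^s_{n,t}(a^*)$ is bounded away from~$1$ you also need to note that $\lambda\to 0$ as $\delta\to 1$, so the bound collapses to $\overline v_1=u_1(a^*,b^*)$ regardless of the stage payoff; with that case added, (a) is complete.

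Part 1 has a genuine gap. You invoke Lemma~\ref{lem:concentration} to control the accumulated continuation-value adjustments on the ground that they are ``mean zero''; but the lemma requires the i.i.d.\ increments to have \emph{strictly negative} mean --- otherwise no $r^*>0$ with $\mathbb E[e^{r^*Z}]=1$ exists --- so it gives no control over a mean-zero discounted walk. Nothing in your construction prevents the promised continuation value from wandering outside $[\underline v_1,\overline v_1]$ with non-negligible probability, which is fatal precisely when $u_1(a^*,b^*)=\overline v_1$ since there is no room above the target. The paper handles this by (i)~\emph{injecting a drift}: in the absorbing subphase, with small probability $\varepsilon_1$ the profile $(\alpha',b^*)$ (paying strictly above $u_1(a^*,b^*)$) is played instead of~$q$, so $Z_t=u_1(a^*,b^*)-u_1(a_t,b_t)$ has strictly negative mean and Lemma~\ref{lem:concentration} genuinely applies; and (ii)~organizing play into \emph{blocks}, each terminated by a deterministic compensation subphase of $(a',b')$ that resets the running discounted average to exactly $u_1(a^*,b^*)$. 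The concentration inequality is then used not to bound magnitudes but to bound the \emph{probability} that the absorbing subphase exits early, which is what makes the discounted action frequency track~$\alpha^*$. Your revelation-stage-plus-lottery skeleton can be salvaged, but only after adding these two ingredients.
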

According to Theorem \ref{Theorem2}, every action distribution that belongs to $\mathcal{A}$ is arbitrarily close to the patient player's action frequency in some equilibria of the reputation game. In fact, the first statement of Theorem \ref{Theorem2} is a generalization of  Statement 1 of Theorem~\ref{Theorem1} since it is without loss of generality to focus on distributions of incentive compatible action profiles such that constraint (\ref{3.5}) is binding (Lemma \ref{LB.1}) and it is without loss of generality to focus on 
distributions supported on  $\Gamma$ that have at most two elements in their support when the objective is to minimize the discounted frequency of~$a^*$
(Lemma \ref{LA.2} and Lemma \ref{LA.3}).

In games where $u_1(a^*,b^*)=\overline{v}_1$, such as the product choice game and the entry deterrence game, an action distribution is the patient player's discounted action frequency in some equilibria \textit{if and only if} it belongs to $\mathcal{A}$. In this class of games, any action frequency that satisfies player $2$'s incentive constraints and yields player $1$ his Stackelberg payoff can be attained in some equilibria of the repeated game.

\section{Economic Applications}\label{sec4}
We apply our results to \textit{monotone-supermodular games} that include the leading applications of reputation models, such as the product choice game, the entry deterrence game, and the fiscal policy game.  
\begin{Definition}\label{Def1}
$(u_1,u_2)$ is monotone-supermodular if there exist a complete order on $A$ and 
a complete order on $B$ such that 
 $u_1(a,b)$ is strictly decreasing in $a$, and $u_2(a,b)$ has strictly increasing differences.\footnote{This definition resembles the one in \citet{LP-20} and \citet{Pei2020} except that there is no state that affects players' payoffs. We also do not require $u_1(a,b)$ to be strictly increasing in $b$.}
\end{Definition}
In order to facilitate the application of Theorem \ref{Theorem1},
we simplify the
linear program that defines $F^*(u_1,u_2)$.  Let $\underline{a}$ be the lowest element of $A$ and let $\underline{b} \in B$ be player $2$'s best reply to $\underline{a}$. 
If player $2$ has multiple best replies to $\underline{a}$, then let $\underline{b}$ the one that maximizes player $1$'s payoff.
Let
\begin{equation}\label{4.1}
    \Gamma^* \equiv \Big\{ (\alpha,b) \in \Gamma \Big|
|    \textrm{BR}_2(\alpha) | \geq 2 \textrm{ and }
b \in \arg\max_{b' \in \textrm{BR}_2(\alpha)} u_1(\alpha,b')
    \Big\}.
\end{equation} 
Intuitively, $\Gamma^*$ is a subset of $\Gamma$ that consists of  incentive compatible action profiles where player $2$ has at least two best replies, and for every $\alpha$ that player $2$ has multiple best replies, $b$ is the one that maximizes player $1$'s payoff. 
Under generic stage-game payoff functions, $\Gamma^*$ is a finite set. Proposition \ref{Prop1} implies that in games with monotone-supermodular payoffs, it is without loss of generality to choose incentive compatible action profiles from the finite set $\Gamma^* \cup \{\underline{a},\underline{b}\}$ instead of the infinite set $\Gamma$. 
\begin{Proposition}\label{Prop1}
If $(u_1,u_2)$ is monotone-supermodular, then
\begin{equation*}
    F^* (u_1,u_2) = \min_{(\alpha_1,\alpha_2,b_1,b_2,q) \in \Delta (A) \times \Delta (A) \times B \times B \times [0,1] }
  \Big\{   q \alpha_1(a^*) +(1-q) \alpha_2(a^*) \Big\},
\end{equation*}
subject to $(\alpha_1,b_1),(\alpha_2,b_2) \in \Gamma^* \cup \{(\underline{a},\underline{b})\}$, and $q u_1(\alpha_1,b_1) +(1-q) u_1(\alpha_2,b_2) \geq u_1(a^*,b^*)$. 
\end{Proposition}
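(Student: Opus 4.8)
Since $\Gamma^*\cup\{(\underline a,\underline b)\}\subseteq\Gamma$, the program on the right is a restriction of the one defining $F^*(u_1,u_2)$ in (\ref{3.3})--(\ref{3.5}), so its value is at least $F^*(u_1,u_2)$, and only the reverse inequality needs proof. Fix an optimal solution $(\alpha_1,\alpha_2,b_1,b_2,q)$ of the program in (\ref{3.3}); one exists by compactness of $\Gamma=\{(\alpha,b):b\in\textrm{BR}_2(\alpha)\}$ and continuity, the feasible set being nonempty (take $\alpha_1=\alpha_2$ the point mass on $a^*$, $b_1=b_2=b^*$, $q=1$). I will show that each incentive-compatible profile $(\alpha_i,b_i)\in\Gamma$ can be replaced by a \emph{single} profile $(\alpha_i',b_i')\in\Gamma^*\cup\{(\underline a,\underline b)\}$ with $\alpha_i'(a^*)\le\alpha_i(a^*)$ and $u_1(\alpha_i',b_i')\ge u_1(\alpha_i,b_i)$. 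Keeping the weight $q$, the tuple $(\alpha_1',\alpha_2',b_1',b_2',q)$ is then feasible for the restricted program, still satisfies (\ref{3.5}) since $q\,u_1(\alpha_1',b_1')+(1-q)u_1(\alpha_2',b_2')\ge q\,u_1(\alpha_1,b_1)+(1-q)u_1(\alpha_2,b_2)\ge u_1(a^*,b^*)$, and has objective $q\,\alpha_1'(a^*)+(1-q)\alpha_2'(a^*)\le q\,\alpha_1(a^*)+(1-q)\alpha_2(a^*)=F^*(u_1,u_2)$. No separate step reducing the number of commitment actions (as in Lemmas \ref{LA.2} and \ref{LA.3}) is needed, because (\ref{3.3}) already uses exactly two profiles.

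\textbf{The replacement.} Fix $(\alpha,b)\in\Gamma$. First, $a^*\neq\underline a$: monotone-supermodularity makes $u_1(\cdot,b)$ strictly decreasing for every $b$, so $\underline a$ is player $1$'s strictly dominant stage-game action and $\textrm{BR}_1(b^*)=\{\underline a\}$, whereas Assumption \ref{Ass2} requires $a^*\notin\textrm{BR}_1(b^*)$. If $|\textrm{BR}_2(\alpha)|\ge 2$, put $\widetilde b\in\arg\max_{b'\in\textrm{BR}_2(\alpha)}u_1(\alpha,b')$; then $(\alpha,\widetilde b)\in\Gamma^*$, $u_1(\alpha,\widetilde b)\ge u_1(\alpha,b)$, and $\alpha(a^*)$ is unchanged, so $(\alpha,\widetilde b)$ is the desired replacement. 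If instead $\textrm{BR}_2(\alpha)=\{b\}$, slide $\alpha$ toward the point mass on $\underline a$: for $\lambda\in[0,1]$ let $\alpha_\lambda\in\Delta(A)$ be given by $\alpha_\lambda(a)=(1-\lambda)\alpha(a)$ for $a\neq\underline a$ and $\alpha_\lambda(\underline a)=(1-\lambda)\alpha(\underline a)+\lambda$. Then (i) $\alpha_\lambda(a^*)=(1-\lambda)\alpha(a^*)$ is weakly decreasing in $\lambda$ (using $a^*\neq\underline a$), and (ii) for every fixed $b'$, $u_1(\alpha_\lambda,b')=(1-\lambda)u_1(\alpha,b')+\lambda\,u_1(\underline a,b')$ is weakly increasing in $\lambda$, since $u_1(\underline a,b')=\max_{a''\in A}u_1(a'',b')\ge u_1(\alpha,b')$.

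\textbf{Finishing the replacement.} Let $\lambda^*\equiv\sup\{\lambda\in[0,1]:\textrm{BR}_2(\alpha_\mu)=\{b\}\text{ for all }\mu<\lambda\}$. If $\textrm{BR}_2(\alpha_\lambda)=\{b\}$ for all $\lambda\in[0,1]$, then $\textrm{BR}_2(\underline a)=\{b\}$, so $\underline b=b$, and $(\underline a,\underline b)$ is the replacement: $u_1(\underline a,b)\ge u_1(\alpha,b)$ by (ii), and the point mass on $\underline a$ assigns $0\le\alpha(a^*)$ to $a^*$. Otherwise $\lambda^*\le 1$; since $b\in\textrm{BR}_2(\alpha_\mu)$ for all $\mu<\lambda^*$ and $\textrm{BR}_2$ is upper hemicontinuous with $B$ finite, $b\in\textrm{BR}_2(\alpha_{\lambda^*})$ while $\textrm{BR}_2(\alpha_{\lambda^*})\neq\{b\}$, hence $|\textrm{BR}_2(\alpha_{\lambda^*})|\ge 2$. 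With $\widetilde b\in\arg\max_{b'\in\textrm{BR}_2(\alpha_{\lambda^*})}u_1(\alpha_{\lambda^*},b')$, the profile $(\alpha_{\lambda^*},\widetilde b)$ lies in $\Gamma^*$ (if $\lambda^*=1$ then $\alpha_{\lambda^*}$ is the point mass on $\underline a$ and $\widetilde b=\underline b$, so it is the admissible profile $(\underline a,\underline b)$). Finally $\alpha_{\lambda^*}(a^*)\le\alpha(a^*)$ by (i), and $u_1(\alpha_{\lambda^*},\widetilde b)\ge u_1(\alpha_{\lambda^*},b)\ge u_1(\alpha,b)$ by the choice of $\widetilde b$ and (ii).

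\textbf{Main obstacle.} The delicate step is the last one: as $\alpha$ is pushed toward the strictly dominant action $\underline a$, one must rule out $\textrm{BR}_2$ jumping past $b$, so that the first distribution along the path at which $b$ loses its status as the unique best reply is one at which player $2$ has two best replies --- this is what feeds $\Gamma^*$, and it rests only on upper hemicontinuity of $\textrm{BR}_2$. The only remaining care is the boundary case $\lambda^*=1$, where the replacement is $(\underline a,\underline b)$. Monotone-supermodularity is used only through $u_1(\cdot,b)$ strictly decreasing (giving $a^*\neq\underline a$ and (i)--(ii)); strictly increasing differences of $u_2$ are not needed, though they make $\textrm{BR}_2(\alpha_\lambda)$ decrease monotonically in $\lambda$, which is the clean geometric picture behind the slide.
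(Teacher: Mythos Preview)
Your proof is correct and rests on the same idea as the paper's: shift probability mass in each $\alpha_i$ toward the dominant action $\underline{a}$, which weakly lowers $\alpha_i(a^*)$ and weakly raises $u_1(\alpha_i,b_i)$, until either you hit $\underline{a}$ or player $2$ becomes indifferent (landing in $\Gamma^*$). The paper executes this tersely by contradiction at an optimum and appeals to Lemma \ref{LB.1} to force the payoff-maximizing $b$, whereas you make the slide explicit via the parameter $\lambda$ and handle the choice of $\widetilde b$ directly; both arguments are essentially the same, with yours being a bit more self-contained.
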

The proof is in Appendix \ref{secC}. 
For the rest of this section, we apply our theorems as well as Proposition \ref{Prop1} to study product choice games, entry deterrence games, and capital taxation games.

\paragraph{Product Choice Game:} Player $1$ is a firm that chooses between high (action $H$) and low effort (action $L$). Player $2$s are consumers, each chooses between purchasing a
high-end product (action $h$) and  a low-end product (action $l$). Players' payoffs are:
\begin{center}
\begin{tabular}{| c | c | c |}
  \hline
  -- & $h$ & $l$ \\
  \hline
  $H$ & $1-c_h,2-\gamma^*$ & $-c_l,1$ \\
  \hline
  $L$ & $1, -\gamma^*$ & $0,0$ \\
  \hline
\end{tabular}
\end{center}
where 
$c_h,c_l \in (0,1)$ are the  costs of effort when the consumer buys the high-end product and the low-end product, respectively, and consumers are willing to choose $h$ only when they believe that the firm exerts high effort with probability more than $\gamma^* \in (0,1)$.

This game has monotone-supermodular payoffs
once we rank the firm's actions according to $H \succ L$ and the consumers' actions according to $h \succ l$.
The firm's Stackelberg action is $H$. According to (\ref{4.1}), $\Gamma^*$ is a singleton  set $\Big\{ (\gamma^*H +(1-\gamma^*)L , h) \Big\}$. Proposition \ref{Prop1} implies that
\begin{equation}\label{4.2}
    F^*(u_1,u_2) = \min_{q \in [0,1]} q \gamma^*, \quad \textrm{subject to} \quad q \gamma^* u_1(H, h) 
    +q (1-\gamma^*) u_1(L,h)
    + (1-q) u_1(L,l) \geq u_1(H,h),
\end{equation}
from which we obtain
\begin{equation}\label{4.3}
F^*(u_1,u_2)=\frac{\gamma^* (1-c_h)}{1-\gamma^* c_h}. 
\end{equation}
\begin{Claim}\label{C1}
The lowest discounted frequency with which the firm exerts high effort strictly increases in $\gamma^*$, strictly decreases in $c_h$, and is independent of $c_l$.
\end{Claim}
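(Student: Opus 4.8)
The plan is to work directly from the closed form for $F^*(u_1,u_2)$ recorded in equation (\ref{4.3}), namely $F^*(u_1,u_2)=\gamma^*(1-c_h)/(1-\gamma^* c_h)$, and to verify the three assertions by elementary calculus, using only the maintained parameter restrictions $\gamma^*,c_h,c_l\in(0,1)$.

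First, independence of $c_l$ is essentially immediate. The cost $c_l$ enters the firm's payoff matrix only through $u_1(H,l)=-c_l$, and in the linear program (\ref{4.2}) the only incentive compatible profiles available are $(\gamma^* H+(1-\gamma^*)L,\,h)$ and $(\underline a,\underline b)=(L,l)$; the profile $(H,l)$ is never played, so $c_l$ does not appear in (\ref{4.2}), and hence not in (\ref{4.3}). Thus $F^*(u_1,u_2)$ does not depend on $c_l$.

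Second, I would differentiate $F^*$ in $\gamma^*$ with $c_h$ held fixed. Applying the quotient rule to $\gamma^*(1-c_h)/(1-\gamma^* c_h)$, the cross terms in $\gamma^* c_h$ cancel in the numerator and one is left with $\partial F^*/\partial\gamma^* = (1-c_h)/(1-\gamma^* c_h)^2$; since $c_h<1$ the numerator is strictly positive and since $\gamma^* c_h<1$ the denominator is strictly positive, so $F^*$ strictly increases in $\gamma^*$. Third, differentiating in $c_h$ with $\gamma^*$ held fixed, another application of the quotient rule gives, after the analogous cancellation, $\partial F^*/\partial c_h = \gamma^*(\gamma^*-1)/(1-\gamma^* c_h)^2$, which is strictly negative because $\gamma^*\in(0,1)$; hence $F^*$ strictly decreases in $c_h$.

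There is no genuine obstacle here once (\ref{4.3}) is available; the only point requiring a word of care is that the denominator $1-\gamma^* c_h$ stays bounded away from zero, which follows from $\gamma^*,c_h\in(0,1)$. (One could instead read the comparative statics off (\ref{4.2}) directly — raising $\gamma^*$ raises the objective $q\gamma^*$ pointwise while shrinking the feasible set, and raising $c_h$ lowers $u_1(H,h)=1-c_h$ and thus slackens the Stackelberg-payoff constraint — but the derivative computation is shorter and also yields the strictness claims, so that is the route I would take.)
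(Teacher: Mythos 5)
Your proposal is correct and takes the same route as the paper, which simply reads Claim \ref{C1} off the closed form (\ref{4.3}): the derivative computations $\partial F^*/\partial\gamma^*=(1-c_h)/(1-\gamma^*c_h)^2>0$ and $\partial F^*/\partial c_h=\gamma^*(\gamma^*-1)/(1-\gamma^*c_h)^2<0$ check out, and your observation that $c_l$ never enters the program (\ref{4.2}) because $(H,l)$ is not in the support of either feasible profile is exactly the right justification for independence of $c_l$.
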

In terms of practical implications, consider a policy maker who wants to increase the frequency with which the firm exerts high effort but does not know which equilibrium players coordinate on. The policy maker is ambiguity averse and evaluates the effectiveness of each policy according to the frequency of high effort in the \textit{worst equilibrium}. That is, his objective is to increase $F^*(u_1,u_2)$.

Claim \ref{C1} implies that the policy maker can increase $F^*(u_1,u_2)$ by subsidizing consumers for purchasing the low-end product or by taxing consumers for purchasing the high-end product.
Intuitively, these policies increase the consumers' demand for high effort when they purchase the high-end product. This leads to an increase in the equilibrium frequency of high effort since the firm needs to induce consumers to purchase the high-end product with high enough probability in order to obtain its Stackelberg payoff.

Next, we consider a variant of the product choice game in which every consumer chooses whether to buy a high-end product, an intermediate product, or a low-end product. The firm's payoffs are:
\begin{center}
\begin{tabular}{| c | c | c | c |}
  \hline
  -- & $h$ & $m$ & $l$\\
  \hline
  $H$ & $1-c$ & $p-c$ & $-c$\\
  \hline
  $L$ & $1$ & $p$ & $0$\\
  \hline
\end{tabular}
\end{center}
where its cost of effort is $c \in (0,1)$, its benefit from selling the high-end product is $1$, its benefit from selling the intermediate product is $p \in (0,1)$, and its benefit from selling the low-end product is  $0$. 

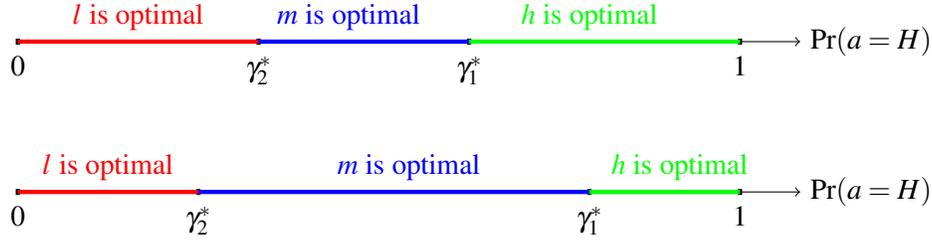
\begin{figure}
\begin{center}
\begin{tikzpicture}[scale=0.4]
\draw[->] (0,0)--(26,0)node[right]{$\Pr(a=H)$};
\draw[ultra thick] (24,0.1)--(24,-0.1)node[below]{$1$};
\draw[ultra thick] (0,0.1)--(0,-0.1)node[below]{$0$};
\draw[ultra thick] (8,0.1)--(8,-0.1)node[below]{$\gamma_2^*$};
\draw[ultra thick] (15,0.1)--(15,-0.1)node[below]{$\gamma_1^*$};
\draw[ultra thick, red] (0,0)--(4,0)node[above]{$l$ is optimal}--(8,0);
\draw[ultra thick, blue] (8,0)--(11,0)node[above]{$m$ is optimal}--(15,0);
\draw[ultra thick, green] (15,0)--(19,0)node[above]{$h$ is optimal}--(24,0);
\draw[->] (0,-5)--(26,-5)node[right]{$\Pr(a=H)$};
\draw[ultra thick] (24,-4.9)--(24,-5.1)node[below]{$1$};
\draw[ultra thick] (0,-4.9)--(0,-5.1)node[below]{$0$};
\draw[ultra thick] (6,-4.9)--(6,-5.1)node[below]{$\gamma_2^*$};
\draw[ultra thick] (19,-4.9)--(19,-5.1)node[below]{$\gamma_1^*$};
\draw[ultra thick, red] (0,-5)--(3,-5)node[above]{$l$ is optimal}--(6,-5);
\draw[ultra thick, blue] (6,-5)--(13,-5)node[above]{$m$ is optimal}--(19,-5);
\draw[ultra thick, green] (19,-5)--(22,-5)node[above]{$h$ is optimal}--(24,-5);
\end{tikzpicture}
\caption{Product choice game with three options: Consumer's best response before  (upper panel) and after they receive a subsidy for purchasing the intermediate product (lower panel).}
\end{center}
\end{figure}

The value of $F^*(u_1,u_2)$ depends on consumers' payoffs only through two sufficient statistics $\gamma_1^*$ and $\gamma_2^*$ with $0< \gamma_2^* < \gamma_1^* <1$, such that  a consumer has an incentive to choose $h$ when the firm exerts high effort with probability more than $\gamma_1^*$, has an incentive to choose $m$ when the firm exerts high effort with probability between $\gamma_2^*$ and $\gamma_1^*$, and has an incentive to choose $l$ when the firm exerts high effort with probability less than $\gamma_2^*$.

This game has monotone-supermodular payoffs once the firm's actions are ranked according to $H \succ L$ and consumers' actions are ranked according to $h \succ m \succ l$.
Applying Proposition \ref{Prop1} to this game, we have:
\begin{equation}\label{4.4}
F^*(u_1,u_2) = \begin{cases}
\frac{\gamma_1^* (1-c)}{1-\gamma_1^* c} & \textrm{ if } p \leq \frac{\gamma_2^*}{\gamma_1^*}\\
\frac{\gamma_2^* (1-c)}{p-\gamma_2^* c} & \textrm{ if } p > \frac{\gamma_2^*}{\gamma_1^*} \textrm{ and } c \geq \frac{1-p}{1-\gamma_2^*}\\
\frac{\gamma_1^* (1-p) -c(\gamma_1^*-\gamma_2^*)}{(1-p)-c(\gamma_1^*-\gamma_2^*)} & \textrm{ if }
 p > \frac{\gamma_2^*}{\gamma_1^*} \textrm{ and } c < \frac{1-p}{1-\gamma_2^*}.
\end{cases}
\end{equation}
Similar to the game with two purchasing options, we examine the effects of 
a small amount of sales taxes and subsidies for each product on  $F^*(u_1,u_2)$.
\begin{enumerate}
    \item A  tax on consumers for purchasing the high-end product (i.e., an increase in $\gamma_1^*$) has no effect on $F^*(u_1,u_2)$ when  $p > \frac{\gamma_2^*}{\gamma_1^*}$ and $c \geq \frac{1-p}{1-\gamma_2^*}$, and increases $F^*(u_1,u_2)$ otherwise. A subsidy on consumers for purchasing the low-end product (i.e., an increase in $\gamma_2^*$) has no effect on $F^*(u_1,u_2)$ when $ p \leq \frac{\gamma_2^*}{\gamma_1^*}$, and increases $F^*(u_1,u_2)$ otherwise.
    \item A subsidy on consumers for purchasing the intermediate product (i.e., a decrease in $\gamma_2^*$ and an increase in $\gamma_1^*$) leads to an increase in $F^*(u_1,u_2)$ when $ p \leq \frac{\gamma_2^*}{\gamma_1^*}$, leads to a decrease in $F^*(u_1,u_2)$ when $p > \frac{\gamma_2^*}{\gamma_1^*}$ and $c \geq \frac{1-p}{1-\gamma_2^*}$, and has an ambiguous effect on $F^*(u_1,u_2)$ when $p > \frac{\gamma_2^*}{\gamma_1^*}$ and $c < \frac{1-p}{1-\gamma_2^*}$.
\end{enumerate}

We obtain two additional insights compared to the case with two products. 
First, the effectiveness of  subsidizing low-end products depends on the firm's benefit from selling intermediate products (i.e., the comparison between $p$ and $\frac{\gamma_2^*}{\gamma_1^*}$).
This is because
a small subsidy for purchasing the low-end product only increases the demand for effort when the consumer decides whether to purchase the intermediate product instead of the low-end product, but does not affect consumers' demand for effort when deciding whether to purchase the high-end product instead of the intermediate product. 
When selling the intermediate product is unprofitable (i.e., $p \leq \frac{\gamma_2^*}{\gamma_1^*}$), an increase in the demand for effort when consumers decide between $l$ and $m$ does not affect the firm's equilibrium action frequencies. Similarly,
the effectiveness of taxing high-end products also depends on the profitability of selling the intermediate product, that is, the comparison between $c$ and $\frac{1-p}{1-\gamma_2^*}$.

Second, subsidizing consumers for purchasing intermediate products encourages the firm to exert effort more frequently when the firm's profit from selling the intermediate product is low (i.e., $p \leq \frac{\gamma_2^*}{\gamma_1^*}$), but encourages the firm to shirk more frequently otherwise. Intuitively, 
subsidizing the intermediate product has an effect similar to that of subsidizing the high-end product in the two-product setting when selling the intermediate product is attractive for the firm, and has an effect similar to that of subsidizing the low-end product when selling the intermediate product is unattractive.

\paragraph{Entry Deterrence Game:}  Player $1$ is an incumbent firm that chooses between fight (action $F$) and accommodate (action $A$). Player $2$s are potential entrants. Each of them chooses between staying out (action $O$) and entering the market (action $I$). Players' payoffs are:
\begin{center}
\begin{tabular}{| c | c | c |}
  \hline
  -- & $O$ & $I$ \\
  \hline
  $F$ & $1-c_o,0$ & $-c_i,-(1-\gamma^*)$ \\
  \hline
  $A$ & $1, 0$ & $0,\gamma^*$ \\
  \hline
\end{tabular}
\end{center}
where $c_o \in (0,1)$ is the incumbent's cost of setting low prices when the potential entrant stays out, and $c_i>0$ is its cost of setting low prices when the potential entrant enters.
Each potential entrant prefers to stay out only when the incumbent fights with probability more than $\gamma^* \in (0, 1)$.

These payoffs are monotone-supermodular
once we rank the incumbent's actions according to $F \succ A$, and the entrant's actions according to $O \succ I$.
The incumbent's Stackelberg action is $F$. Proposition \ref{Prop1} implies that:
\begin{equation*}
    F^*(u_1,u_2)= \frac{(1-c_o) \gamma^*}{ 1-c_o \gamma^*}.
\end{equation*}
\begin{Claim}\label{C2}
The lowest discounted frequency with which the incumbent fights potential entrants strictly increases in $\gamma^*$, strictly decreases in $c_o$, and 
is independent of 
$c_i$.
\end{Claim}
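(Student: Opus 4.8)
The plan is to first read off the closed form of $F^*(u_1,u_2)$ from Proposition \ref{Prop1} and then differentiate it in each parameter. Here $a^*=F$ and $b^*=O$, so $u_1(a^*,b^*)=1-c_o$. Player $2$ is indifferent between $O$ and $I$ exactly at the mixed action $\gamma^*F+(1-\gamma^*)A$ (both give her payoff $0$, and $\alpha(F)=\gamma^*$ is the unique such mixed action since $|B|=2$), and there her two best replies give player $1$ the payoffs $u_1\big(\gamma^*F+(1-\gamma^*)A,O\big)=1-c_o\gamma^*>0$ and $u_1\big(\gamma^*F+(1-\gamma^*)A,I\big)=-c_i\gamma^*<0$; hence $\Gamma^*=\{(\gamma^*F+(1-\gamma^*)A,\,O)\}$. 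Also $\underline{a}=A$, and since $u_2(A,I)=\gamma^*>0=u_2(A,O)$ we have $\underline{b}=I$ with $u_1(\underline a,\underline b)=0$. Plugging the two available profiles --- with $a^*$-probabilities $\gamma^*$ and $0$ and player-$1$ payoffs $1-c_o\gamma^*$ and $0$ --- into the program of Proposition \ref{Prop1}, the minimization collapses to $\min_{q\in[0,1]} q\gamma^*$ subject to $q(1-c_o\gamma^*)\ge 1-c_o$, whose optimum is $q=\frac{1-c_o}{1-c_o\gamma^*}$, giving $F^*(u_1,u_2)=\frac{(1-c_o)\gamma^*}{1-c_o\gamma^*}$.

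Independence from $c_i$ then follows from the derivation itself: $c_i$ enters only through the strict inequality $-c_i\gamma^*<1-c_o\gamma^*$ that singles out $O$ within $\Gamma^*$, and this holds for all $c_i>0$, so neither the reduced feasible set nor its value moves with $c_i$. For the monotonicity claims I would set $\phi(\gamma^*,c_o)\equiv\frac{(1-c_o)\gamma^*}{1-c_o\gamma^*}$ on $(0,1)^2$; the denominator is bounded away from $0$ there, so $\phi$ is smooth, and differentiating gives
\[
\frac{\partial\phi}{\partial\gamma^*}=\frac{1-c_o}{(1-c_o\gamma^*)^2}>0
\qquad\text{and}\qquad
\frac{\partial\phi}{\partial c_o}=\frac{-\gamma^*(1-\gamma^*)}{(1-c_o\gamma^*)^2}<0,
\]
both signs using $c_o,\gamma^*\in(0,1)$. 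Thus $F^*$ strictly increases in $\gamma^*$ and strictly decreases in $c_o$, which is the content of the claim.

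There is no substantive obstacle; the only step needing care is the reduction --- correctly identifying $\Gamma^*$ and $(\underline a,\underline b)$ from the payoff matrix and checking that the payoff comparison eliminating $I$ inside $\Gamma^*$ is valid uniformly over $c_i>0$. That uniformity is exactly what upgrades ``the displayed expression contains no $c_i$'' to the genuine statement that $F^*$ is independent of $c_i$.
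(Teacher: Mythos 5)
Your proposal is correct and follows the same route the paper takes implicitly: identify $\Gamma^*=\{(\gamma^*F+(1-\gamma^*)A,\,O)\}$ and $(\underline a,\underline b)=(A,I)$, apply Proposition \ref{Prop1} to reduce the program to $\min q\gamma^*$ subject to $q(1-c_o\gamma^*)\ge 1-c_o$, obtain $F^*=\frac{(1-c_o)\gamma^*}{1-c_o\gamma^*}$, and read off the comparative statics by differentiation. Your added observation that the elimination of $I$ inside $\Gamma^*$ holds uniformly over $c_i>0$ is a worthwhile explicit justification of the independence claim that the paper leaves tacit.
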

In terms of practical implications, 
consider a policy maker who can subsidize potential entrants for entering the market.
This is modeled as an increase in every entrant's payoff from action $I$ by $s>0$. 
Claim \ref{C2} implies that the frequency with which the incumbent fights entry is non-monotone with respect to the amount of subsidy. In particular,
\begin{enumerate}
    \item When the subsidy to potential entrants is close to but strictly less than $1-\gamma^*$, the strategic-type incumbent fights with frequency close to $1$ in \textit{all} equilibria. More generally, our formula implies that
    when $s < 1-\gamma^*$, a marginal increase in the amount of subsidy increases $F^*(u_1,u_2)$.
    \item When the subsidy is more than $1-\gamma^*$, 
each entrant has a strict incentive to enter the market regardless of the incumbent's action, so the incumbent
plays $A$ in every period. Therefore, the frequency with which the incumbent fights is zero in \textit{all} equilibria.
\end{enumerate}

\paragraph{Fiscal Policy Game:} Player $1$ is a government that chooses between a normal tax rate and a high tax rate (i.e., expropriation) and player $2$s are citizens who decide whether to invest. Players' payoffs are:
\begin{center}
\begin{tabular}{| c | c | c |}
  \hline
  -- & Invest & Not Invest \\
  \hline
  Normal Tax Rate & $\tau,1-\tau-c$ & $0,0$ \\
  \hline
  Expropriate & $1,-c$ & $0,0$ \\
  \hline
\end{tabular}
\end{center}
where the low tax rate is $\tau \in (0,1)$ and the cost of investment is $c \in (0,1-\tau)$. These payoffs are monotone-supermodular. The government's Stackelberg action is ``\textit{normal tax rate}'' and its Stackelberg payoff is $\tau$. 
According to Proposition \ref{Prop1}, the highest frequency
with which the government expropriates is:
\begin{equation*}
    1-F^*(u_1,u_2)=1-\frac{\tau}{1-\tau} \cdot \frac{c}{1-c},
\end{equation*}
which is a decreasing function of both $\tau$ and $c$.
This conclusion implies that in the worst case scenario, the frequency of government expropriation is lower when the government's revenue is higher under a normal tax rate (i.e., $\tau$ is larger), or when it is more costly for the citizens to invest (i.e., $c$ is larger).

\section{Discussions of Modeling Assumptions and Results}\label{sec6}
\paragraph{The Role of Assumption 2:} Assumption \ref{Ass2} rules out games in which the optimal commitment outcome $(a^*,b^*)$ is a stage-game Nash equilibrium (such as coordination games and chicken games), as well as games where player $1$'s optimal commitment payoff is no more than his minmax payoff (such as matching pennies). 

Our formula for the lowest discounted frequency of the Stackelberg action fails when $u_1(a^*,b^*) \leq \underline{v}_1$. For example, consider the following variant of the matching penny game that satisfies Assumption \ref{Ass1} and the first part of Assumption \ref{Ass2} but violates the second part of Assumption \ref{Ass2}:
\begin{center}
\begin{tabular}{| c | c | c |}
  \hline
  -- & $h$ & $t$ \\
  \hline
  $H$ & $1+\varepsilon,-1$ & $-1+\varepsilon,1$ \\
  \hline
  $T$ & $-1,1$ & $1,-1$ \\
  \hline
\end{tabular}
\end{center}
where $\varepsilon>0$. Player $1$'s unique Stackelberg action is $H$, his Stackelberg payoff is $-1+\varepsilon$, and his minmax payoff is close to $0$ when $\varepsilon$ is small enough. 
Therefore, $F^*(u_1,u_2)$ is close to $0$ when $\varepsilon$ is close to $0$.

However, if
both $\pi$ and $\varepsilon$ are small, then
the discounted frequency of action $H$ is close to $1/2$ in every equilibrium. This means that neither the lower bound 
 $F^*(u_1,u_2)$ nor the upper bound $1$  can be approximately attained in any equilibrium of the reputation game.

In games where $u_1(a^*,b^*)> \underline{v}_1$, but $(a^*,b^*)$ is a stage-game Nash equilibrium, our formula for the lowest discounted frequency for $a^*$ applies to the battle of sexes game and the chicken game, 
\begin{center}
\begin{tabular}{| c | c | c |}
  \hline
  Battle of Sexes & $o$ & $f$ \\
  \hline
  $O$ & $2,1$ & $0,0$ \\
  \hline
  $F$ & $0,0$ & $1,2$ \\
  \hline
\end{tabular}
\quad
\begin{tabular}{| c | c | c |}
  \hline
  Chicken Game & $h$ & $d$ \\
  \hline
  $H$ & $0,0$ & $7,2$ \\
  \hline
  $D$ & $2,7$ & $6,6$ \\
  \hline
\end{tabular}
\end{center}
or more generally, when $u_1(a^*,b^*)$ is player $1$'s highest feasible payoff and $u_1(a^*,b^*)>u_1(a,b)$ for every $(a,b) \neq (a^*,b^*)$. In those games, $F^*(u_1,u_2)=1$. This is because player $1$'s payoff is close to $u_1(a^*,b^*)$ in every equilibrium of the reputation game, so $a^*$ must be played with discounted frequency close to $1$. 

Next, we present a counterexample that satisfies Assumption \ref{Ass1} and the second part of Assumption \ref{Ass2} but violates the first part of Assumption \ref{Ass2}. 
Suppose players' payoffs are:
 \begin{center}
\begin{tabular}{| c | c | c |}
  \hline
  -- & $T$ & $N$ \\
  \hline
  $H$ & $1,1$ & $0,0$ \\
  \hline
  $M$ & $0,3$ & $3,0$ \\
  \hline
$L$ & $0,0$ & $0,3$ \\
  \hline
\end{tabular}
\end{center}
Player $1$'s Stackelberg action is $H$. Since $N$ is player $2$'s best reply to player $1$'s mixed action $\frac{1}{2}M + \frac{1}{2}L$, from which player $1$'s expected payoff is $3/2$, the value of $F^*(u_1,u_2)$ is $0$.

When the prior probability of commitment type $\pi$ is strictly greater than $3/4$, the discounted frequency with which player $1$ plays $H$ is $1$ in every equilibrium of the reputation game. This is because in every period where player $2$ has not observed player $1$ playing actions other than $H$, she has a strict incentive to play $T$, so player $1$'s payoff is $1$ by playing $H$ in every period. When player $1$ deviates to $M$ or $L$, his stage-game payoff is $0$, and his continuation value is no more than $1$ according to the folk theorem result of \citet*{FKM-90}. This implies that player $1$ plays $H$ at every on-path history in every equilibrium.

\paragraph{Mixed-Strategy Commitment Types:} Our model excludes commitment types that play mixed strategies. In order to understand the new challenges brought by mixed-strategy commitment types, consider the product choice game in Section \ref{sec4} where with positive probability, player $1$ is a type who mechanically plays $(\gamma^*+\varepsilon) H +(1-\gamma^*-\varepsilon) L$ in every period, where $\varepsilon>0$ is small.

A new complication arises since
the strategic type can never be separated from the mixed-strategy commitment type. As a result, the continuation game always has nontrivial incomplete information regardless of the strategies being played. This stands in contrast to games where all commitment types play pure strategies, in which the strategic type is separated from a commitment type as soon as he stops imitating that type.

Analyzing repeated games with persistent private information and short-lived uninformed players is a well-known challenge in the repeated games literature, and to the best of our knowledge, there is no existing result that characterizes the informed player's equilibrium behaviors or his equilibrium action frequencies.\footnote{Very few results are obtained in repeated games between an informed patient player and a sequence of uninformed myopic players. \cite{pei} characterizes the set of equilibrium payoffs between an informed seller and a sequence of uninformed buyers when the seller has persistent private information about his cost. His result relies on the assumption that all types of the seller have the same ordinal preference over stage-game outcomes, and does not apply when there are mixed-strategy commitment types.}

\paragraph{Rich Set of Commitment Types:} Our baseline model focuses on settings where there is only one commitment type. Our theorems extend to environments with any finite number of commitment types, as long as all of them play pure strategies,  and there exists a commitment type who  plays $a^*$ in every period.  

Our proof for the discounted frequency of action $a^*$ being no less than 
$F^*(u_1,u_2)$ remains the same. On the construction of equilibria that approximately attain a given frequency in $\mathcal{A}$, for every type space that satisfies the above requirements, 
there exists $T \in \mathbb{N}$ such that for every $\delta \in (0,1)$ and in every equilibrium under $\delta$, player $2$'s posterior belief in period $T$ assigns positive probability to at most one commitment type.
Construct the continuation equilibrium starting from period $T$ according to our proof in Appendix \ref{sec5}, 
the discounted frequency of player $1$'s action is close to $\alpha^* \in \mathcal{A}$ when
$\delta$ is close to $1$.

\paragraph{Testable Predictions:} Generally speaking, there are three challenges to test the predictions of reputation models.\footnote{Despite the large literature that takes repeated game predictions to the lab, see \cite{DF2018}, we are unaware of experimental results on repeated games with incomplete information between a patient player and a sequence of myopic players.}  First, econometricians do not know which equilibrium players coordinate on. 
Second, econometricians usually observe players' behaviors rather than their payoffs, while most of the existing reputation results that apply to all equilibria (such as those in Fudenberg and Levine 1989) are stated in terms of the patient player's payoff but not his behaviors.
Third, many interesting equilibria in reputation games are in mixed strategies, but econometricians usually cannot observe these mixed strategies and can only observe the realized pure strategy.

Our results overcome the first and the second challenge by delivering predictions on the patient player's action frequencies that apply to all equilibria. Take the product choice game example in Section \ref{sec4}. The expression for $F^*(u_1,u_2)$ depends only on two terms:
\begin{enumerate}
    \item $\gamma^*$: the minimal probability of high effort above which player $2$ is willing to play $h$;
    \item $c_h$: the ratio between the cost of effort
    and the firm's benefit when a consumer buys the high-end product.
\end{enumerate}
The values of $\gamma^*$ and $c_h$ can be computed without knowing all the details of players' stage-game payoff functions. Therefore, testing our predictions on the patient player's action frequencies has less demanding data requirements compared to testing the predictions on payoffs in canonical reputation models. 

In context of the product choice game between a firm and a sequence of consumers, one way to address the third challenge is to use the distribution of the firm's actions across different markets as a proxy for its mixed actions. This idea is applicable when the firm is a chain store that operates in many independent and geographically separated markets, and moreover, the consumers in each market can only observe the firm's actions in their own market but cannot observe the firm's actions in other markets. This is usually the case in developing countries where there is a lack-of record-keeping institutions, so that most consumers rely on word-of-mouth communication to learn about the firm's past behaviors. In these situations, it is reasonable to assume that consumers in one market cannot observe the firm's past behaviors in other markets. Using this idea, suppose an econometrician can observe the firm's behavior in every period and in every market, then he can compute the frequency of the firm's behaviors using his observations. He can then apply Theorems \ref{Theorem1} and \ref{Theorem2} to examine whether his observations are consistent with the predictions of reputation models.

The above discussion also unveils a limitation of our results, that they only characterize the set of action frequencies 
that can arise in equilibrium, but do not deliver predictions on the action frequencies that apply to \textit{every path of equilibrium play}. Therefore, an econometrician cannot test our predictions after observing a realized path of equilibrium play. He can do that after observing the firm's mixed actions, e.g.,  
observing the firm's behaviors across many markets and use the empirical distribution as a proxy for the firm's mixed action.

\section{Conclusion}\label{sec7}
We examine the effects of reputation on the frequencies with which a patient player plays each of his actions. Our results characterize tight bounds that apply to all equilibria in a broad class of games. 
Our research question stands in contrast to the 
reputation literature that focuses on the patient player's equilibrium payoff. Our results stand in contrast to those that study the patient player's behavior in some particular equilibria.

Our results imply that  in games where the optimal commitment outcome is not a stage-game Nash equilibrium, the patient player may play his optimal commitment action with frequency bounded away from one no matter how patient he is. When the patient player's optimal commitment payoff 
coincides with his highest equilibrium payoff in the repeated complete information game, reputation effects cannot further refine the patient player's behavior beyond that fact that his equilibrium payoff is at least his optimal commitment payoff. 

In terms of applications, our results imply that a policy maker can increase the frequency with which a firm exerts high effort by subsidizing consumers for purchasing low-end products or by taxing consumers for purchasing high-end products. They also imply that a small amount of subsidy to potential entrants for entering the market makes an incumbent more aggressive in fighting entrants, but a large amount of subsidy encourages the incumbent to accommodate entry. 

\newpage
\appendix
\section{Overview of Proofs}\label{sec5}
Our proof consists of two parts. Part 1 constructs a class of equilibria in which player $1$'s discounted action frequency is close to $\alpha^* \in \mathcal{A}$  when $\delta$ is close  to $1$.
Part 2 shows that $G^{(\sigma_1,\sigma_2)}(a^*)$ cannot be strictly lower than $F^*(u_1,u_2)$ in any equilibrium when $\delta$ is large enough, and in games where
$\overline{v}_1=u_1(a^*,b^*)$, any action distribution that does not belong to $\mathcal{A}$ cannot be player $1$'s action frequency in any equilibrium.

The first part of our proof makes a methodological contribution, where we establish a discounted version of the Wald's inequality
to bound the discounted frequency of each action.
We provide an overview of our equilibrium construction and explain our methodological contribution in this section, with details relegated to Appendix \ref{secB}.
The second part of our proof is standard, which we relegate to Appendix \ref{secA}.

\paragraph{Equilibrium Construction:}
The first part of Assumption \ref{Ass2} implies the existence of $a' \neq a^*$ such that $u_1(a',b^*) > u_1(a^*,b^*)$. Since $a^*$ is player $1$'s unique Stackelberg action, there exists $b' \neq b^*$ that best replies to $a'$ such that $u_1(a',b')< u_1(a^*,b^*)$.
Let $\alpha' \in \Delta \{a^*,a'\}$ be such that $\{b^*\}=\textrm{BR}_2(\alpha')$ and $u_1(\alpha',b^*)> u_1(a^*,b^*)$.

We construct a three-phase equilibrium in which the discounted frequency of player $1$'s actions is close to $\alpha^* \in \mathcal{A}$. Let $q \in \Delta (\Gamma)$ be a distribution of incentive compatible action profiles such that $\alpha^*=\int_{\alpha} \alpha dq$ and $\int_{(\alpha,b) \in \Gamma} u_1(\alpha,b) dq =u_1(a^*,b^*)$.
The equilibrium play starts from a \textit{preparation phase}, gradually reaches a \textit{normal phase}, and reaches a \textit{punishment phase} if and only if player $1$ has made an off-path deviation.
\begin{enumerate}
    \item Play belongs to the preparation phase when $t=0$, or when $t \geq 1$ and $(a^*,b^*)$ was played from period $0$ to $t-1$. In this phase, the strategic-type player $1$ plays $\alpha'$ and player $2$ plays $b^*$.
    \item Play belongs to the normal phase when there exists $s \leq t-1$ such that $(a_s,b_s) \neq (a^*,b^*)$. The normal phase consists of a number of \textit{blocks}, and players' strategies in each block will be specified later on.
    \item Player $1$'s continuation value when play first reaches the punishment phase is $\underline{v}_1$. This is feasible since player $2$'s belief attaches zero probability to the commitment type at every off-path history.
\end{enumerate}
In every block of the normal phase, $(\alpha',b^*)$ is played for the first $T_1 \in \mathbb{N}$ periods,
where $T_1$ is a constant that is independent of $\delta$. A \textit{review} happens by the end of these $T_1$ periods:
\begin{enumerate}
    \item If $(a',b^*)$ was not played in all $T_1$ periods, then play enters a \textit{compensation subphase}, where $(a',b')$ is played until period $T \in \mathbb{N}$ such that $(1-\delta) \sum_{t=0}^T \delta^t u_1(a_t,b_t) = (1-\delta^{T+1}) u_1(a^*,b^*)$.  The current block ends in period $T$ and the next block starts in period $T+1$.
    If there is no such integer $T$, then use the public randomization device in the last period that satisfies $(1-\delta) \sum_{t=0}^T \delta^t u_1(a_t,b_t) > (1-\delta^{T+1}) u_1(a^*,b^*)$.
    \item If $(a',b^*)$ was played in all  $T_1$ periods, then play enters an \textit{absorbing subphase}, in which   $(\alpha',b)$ is played with probability $\varepsilon_1>0$ and  $q \in \Delta (\Gamma)$ is played with complementary probability,  dictated by the realization of public randomization in the beginning of each period.
    The absorbing subphase ends in period $T$ where $T$ is the smallest integer that satisfies either
    \begin{equation*}
        \sum_{t=0}^T \delta^t u_1(a_t,b_t) < (1-\delta^{T+1}) u_1(a^*,b^*) + c(1-\delta),
    \end{equation*}
    or
    \begin{equation*}
      \sum_{t=0}^T \delta^t u_1(a_t,b_t) > (1-\delta^{T+1}) \Big(\varepsilon_1 u_1(\alpha',b^*) +(1-\varepsilon_1) \mathbb{E}_{(\alpha,b) \sim q} [u_1(\alpha,b)]\Big)-c(1-\delta),
    \end{equation*}
    where $c >0$ is a constant that is independent of $\delta$.
    Once the absorbing subphase ends, play enters the compensation subphase described in the first bulletin point, and the current block ends when $(1-\delta) \sum_{t=0}^T \delta^t u_1(a_t,b_t) = (1-\delta^{T+1}) u_1(a^*,b^*)$.
\end{enumerate}

One can verify that player $2$s' incentive constraints are satisfied.
The strategic type's  discounted average payoff is $u_1(a^*,b^*)$ from each of his on-path strategies, and  his continuation value at every on-path history is bounded away from $\underline{v}_1$.
The second part of Assumption \ref{Ass2} requires that $u_1(a^*,b^*)> \underline{v}_1$, which implies that player $1$ has no incentive to make any off-path deviations when $\delta$ is large enough. The two together verify player $1$'s incentive constraints.

The challenging step is to compute player $1$'s discounted action frequencies when he uses this history-dependent mixed strategy. To the best of our knowledge, the existing centrality results in probability theory either cannot handle geometric discounting (such as the Chernoff-Hoeffding's inequality) or do not provide tight bounds on the probability of concentration (such as the Lindeberg-Feller central limit theorem), making those  results inapplicable in our context.
We establish a novel concentration inequality that can overcome both challenges, which is also applicable to future studies of players' behaviors in dynamic games.
\begin{Lemma}\label{lem:concentration}
For every $\delta \in (0, 1)$, $c \geq 0$, and
sequence of i.i.d. random variables $Z_t$ with finite support and mean $\mu < 0$, and $Z_t$ takes positive value with positive probability, we have:
\begin{align*}
\Pr\left[\bigcup_{n=1}^\infty \left\{\sum_{t=1}^{n} \delta^t Z_t \geq c\right\}\right]
\leq \exp(-r^* \cdot c)
\end{align*}
where $r^* > 0$ is the smallest positive real number such that
$\expect[z\sim Z_1]{\exp(r^*z)} = 1$.
\end{Lemma}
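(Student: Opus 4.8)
The plan is to prove this via an exponential supermartingale (Doob) argument, which is the discounted analogue of the classical proof of Wald's inequality / the Chernoff bound for random walks. Define $M_n \equiv \exp\bigl(r^* \sum_{t=1}^{n} \delta^t Z_t\bigr)$ with $M_0 = 1$, where $r^* > 0$ is the smallest positive root of $\psi(r) \equiv \expect[z \sim Z_1]{\exp(rz)} = 1$. Such an $r^*$ exists and is strictly positive: $\psi(0) = 1$, $\psi'(0) = \mu < 0$ so $\psi$ dips below $1$ just to the right of $0$, and since $Z_1$ takes a positive value with positive probability, $\psi(r) \to +\infty$ as $r \to +\infty$; continuity then gives a root, and I take the smallest one, which is $>0$.

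The key point is that $M_n$ is a supermartingale with respect to the natural filtration $\mathcal{F}_n = \sigma(Z_1,\dots,Z_n)$. Indeed,
\[
\expect{M_{n+1} \given \mathcal{F}_n} = M_n \cdot \expect{\exp\bigl(r^* \delta^{n+1} Z_{n+1}\bigr)} = M_n \cdot \psi\bigl(r^* \delta^{n+1}\bigr).
\]
So I need $\psi(r^* \delta^{n+1}) \leq 1$ for all $n \geq 0$. Since $0 < \delta < 1$ we have $0 < r^* \delta^{n+1} < r^*$, so this reduces to showing $\psi(r) \leq 1$ for all $r \in (0, r^*)$. This is where the choice of $r^*$ as the \emph{smallest} positive root matters: $\psi$ is convex (it is a moment generating function), $\psi(0) = 1$, $\psi'(0) = \mu < 0$, and $r^*$ is the first return to the value $1$; by convexity $\psi$ stays below the chord from $(0,1)$ to $(r^*,1)$, which is the constant $1$, on the whole interval $[0, r^*]$. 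Hence $M_n$ is a nonnegative supermartingale with $\expect{M_n} \leq \expect{M_0} = 1$.

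Now apply the maximal inequality for nonnegative supermartingales (Ville's inequality): for the stopping time $\tau \equiv \inf\{n \geq 1 : \sum_{t=1}^{n} \delta^t Z_t \geq c\}$, on the event $\{\tau < \infty\}$ we have $M_\tau \geq \exp(r^* c)$, so
\[
\Pr\Bigl[\bigcup_{n=1}^{\infty}\Bigl\{\textstyle\sum_{t=1}^{n}\delta^t Z_t \geq c\Bigr\}\Bigr] = \Pr[\tau < \infty] \leq \Pr\Bigl[\sup_{n} M_n \geq \exp(r^* c)\Bigr] \leq \frac{\expect{M_0}}{\exp(r^* c)} = \exp(-r^* c).
\]
The finite-support hypothesis on $Z_t$ guarantees $\psi(r)$ is finite for all real $r$, so there are no integrability issues, and it also ensures $M_n$ is genuinely integrable at every stage. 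I expect the main obstacle to be the careful justification of the supermartingale step — specifically pinning down that $\psi(r) \le 1$ on all of $(0,r^*)$ via convexity and the sign of $\psi'(0) = \mu$, and making sure the "smallest root" is used so that we do not accidentally land in a region where $\psi > 1$. Everything else (existence of $r^*$, the maximal inequality, the final substitution) is routine. A minor point worth stating explicitly is that the bound is uniform in $\delta$, which is exactly what makes it useful for the equilibrium construction where $\delta \to 1$.
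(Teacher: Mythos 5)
Your proof is correct and follows essentially the same route as the paper's: both hinge on showing $\expect[z\sim Z_1]{\exp(rz)}\le 1$ for all $r\in[0,r^*]$ (so that each factor $\expect{\exp(r^*\delta^t Z_t)}\le 1$, since $r^*\delta^t<r^*$), deducing $\expect{\exp(r^*\sum_{t\le n}\delta^t Z_t)}\le 1$, and then applying a stopping-time/maximal inequality at the first passage above $c$. The paper phrases the middle step as an exponential change of measure while you phrase it as a nonnegative supermartingale with Ville's inequality, but these are the same computation, and your explicit treatment of the optional-stopping step is if anything slightly more careful than the paper's.
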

Intuitively, Lemma \ref{lem:concentration}
bounds the probability with which the discounted sum of a sequence of random variables  deviates significantly from its expectation.
\begin{proof}[Proof of Lemma A.1:]
Let $\gamma_{Z, t}(r) = \ln \expect[z\sim Z_t]{\exp(rz \delta^t)}$, and let
\begin{align*}
q_{Z, r, t}(z) = p_{Z}(z)\exp(rz \delta^t - \gamma_{Z,t}(r)),
\end{align*}
where $p_{Z}(z)$ is the probability mass function of random variable $Z$.
One can verify that $q$ is a well-defined probability measure.
For a sequence of random variables $Z^n \equiv \{Z_1, \dots, Z_n\}$, we have
\begin{align*}
q_{Z^n, r}(z_1, \dots, z_n)
= p_{Z^n}(z_1, \dots, z_n) \exp
\left(\sum_{t=1}^n r z_t \delta^t -\sum_{t=1}^n\gamma_{Z_t,t}(r)\right).
\end{align*}
Let $s_n = \sum_{t=1}^n z_t\delta^t$,
we have
\begin{align*}
q_{S^n, r}(s_n)
= p_{S^n}(s_n) \exp
\left(r s_n -\sum_{t=1}^n\gamma_{Z_t,t}(r) \right).
\end{align*}
Since $q_{S^n, r}$ is a probability measure, we have
\begin{align}\label{eq:expectation}
\expect{\exp
\left(r s_n -\sum_{t=1}^n\gamma_{Z_t,t}(r) \right)}
= 1.
\end{align}
Let $\gamma(r) \equiv \expect[z\sim Z_1]{\exp(rz)}$,
we have $\gamma(0) = 1$ and $\gamma'(0) = \expect[z\sim Z_1]{z} < 0$.
Since $r^* > 0$ is the smallest positive real number such that
$\expect[z\sim Z_1]{\exp(r^*z)} = 1$,
we have $\gamma(r) \leq 1$ for any $0\leq r\leq r^*$.
Since random variables $Z_t$ are i.i.d.,
we have
$$\gamma_{Z_t,t}(r^*) = \ln \expect[z\sim Z_t]{\exp(r^*z\delta^t)}
= \ln \expect[z\sim Z_1]{\exp(r^*z\delta^t)} \leq 0$$
for every $t \geq 1$.
By substituting $r=r^*$ in inequality \eqref{eq:expectation},
we have
$\expect{\exp\left(r^* s_n\right)} \leq 1$.\footnote{Note that when $\delta = 1$,
the inequality holds with equality, which is the Wald's identity established in \citet{Wald-44}.}
Let $J$ be the stopping time that the sum $s_J$ first exceeds the threshold $c$,
we have
\begin{align*}
\Pr\left[s_J\geq  c\right]
\cdot \expect{\exp(r^* s_J) \Big| s_J \geq  c} \leq 1,
\end{align*}
which implies that
\begin{equation*}
\Pr\left[\bigcup_{n=1}^\infty \left\{\sum_{t=1}^{n} \delta^t Z_t \geq c\right\}\right]
= \Pr\left[s_J\geq  c\right]
\leq \exp(-r^*\cdot c). \qedhere
\end{equation*}
\end{proof}

Back to the illustration of our constructive proof. Let $Z$ be a random variable that equals $0$ with probability $\varepsilon_1 \alpha' (a^*)$, equals $u_1(a^*,b^*)-u_1(a',b^*)$ with probability $\varepsilon_1 \alpha' (a')$, and equals $u_1(a^*,b^*)-u_1(\alpha,b)$ with probability $1-\varepsilon_1$ where $(\alpha,b) \in \Gamma$ is drawn according to distribution $q$. Intuitively, $Z_t$ measures the difference between the stage-game payoff player $1$ receives in the absorbing subphase and his target payoff $u_1(a^*,b^*)$.

Since the support of $Z$ is a finite set and the expectation of $Z$ is negative,
we can
apply Lemma \ref{lem:concentration} to a sequence of random variables with distribution $Z$. Our lemma implies that once play enters the absorbing subphase,
the event that:
{\sloppy
\begin{itemize}
    \item $\sum_{t=0}^T \delta^t u_1(a_t,b_t)$ is between $(1-\delta^{T+1}) u_1(a^*,b^*)+c(1-\delta)$ and $(1-\delta^{T+1}) \Big(\varepsilon_1 u_1(\alpha',b^*) +(1-\varepsilon_1) \mathbb{E}_{(\alpha,b) \sim q} [u_1(\alpha,b)]\Big)-c(1-\delta)$
for all $T \in \mathbb{N}$,
\end{itemize}}
\noindent occurs with probability bounded away from $0$. Since all other phases end in finite time in expectation, the discounted frequency of player $1$'s action is close to his discounted action frequency in the absorbing subphase, which is at most $\varepsilon_1$ away from $\alpha^*$.

\paragraph{Remark on Public Randomization Device:} The public randomization device is introduced to ease the exposition. It can be dispensed in our constructive proof for a reason similar to that in \cite{FM-91}. In what follows, we provide an intuitive explanation based on the constructive proof of Theorem \ref{Theorem1} in Appendix \ref{sec5}. 
The details of the construction without public randomization is available upon request.

Recall $(\alpha_1,\alpha_2,b_1,b_2,q)$ which solves the constrained minimization problem that defines $F^*(u_1,u_2)$.
Intuitively, the public randomization device plays two roles.
First, it is used to implement particular interior action frequencies, i.e., those in which $a^*$ is played with frequency strictly between $0$ and $1$. 
For this purpose, it is sufficient to choose a sequence of pure actions under which the discounted frequency approximates that of the implemented mixed action. 
Second, the public randomization device  delivers the exact continuation payoff that makes player $1$ indifferent
by mixing between pure action profiles $(a,b) \in \Gamma$.
As shown in \citet{FM-91}, any payoff~$v$ can be decomposed as the discounted average payoff of 
an infinite sequence of deterministic pure action profiles $(a,b) \in \Gamma$ 
when player~$1$ is sufficiently patient. 
Therefore, our constructed  equilibrium can be sustained in absence of public randomization. 
Finally, for any $\varepsilon>0$, let $T$ be the time period such that $\delta^T= \epsilon$. 
When players have access to a public randomization device, 
we use the public randomization device by the end of each block to set the discounted average payoff exactly to $u_1(a^*,b^*)$. In environments without the public randomization device, we can immediately start the next block if that block ends before period $T$. 
For any block after period $T$, 
we replace the public randomization device with an infinite sequence of deterministic pure action profiles that exactly implements the desired discounted payoff.
Note that this does not affect the incentives of player $1$ for using mixed strategies 
because the payoff differences in earlier blocks will be rectified by the compensation phase in later blocks. 
The public randomization device can be dispensed 
since the discounted frequency of any action affected by replacing the public randomization device after time $T$ is at most $\epsilon$.

\section{Proofs of Statement 1 of Theorems 1 and  2}\label{secB}
We start from showing that Statement 1 of Theorem \ref{Theorem1} is implied by Statement 1 of Theorem \ref{Theorem2} by showing that it is without loss of generality to focus on $\{\alpha_1,\alpha_2,b_1,b_2,q\}$ such that (\ref{3.5}) is binding in the constrained optimization problem that defines $F^*(u_1,u_2)$. Let
\begin{equation}\label{B.1}
    F^{**} (u_1,u_2) \equiv \min_{(\alpha_1,\alpha_2,b_1,b_2,q) \in \Delta (A) \times \Delta (A) \times B \times B \times [0,1] }
  \Big\{   q \alpha_1(a^*) +(1-q) \alpha_2(a^*) \Big\},
\end{equation}
subject to
\begin{equation}\label{B.2}
(\alpha_1,b_1) \in \Gamma,\quad (\alpha_2,b_2) \in \Gamma,
\end{equation}
and
\begin{equation}\label{B.3}
    q u_1(\alpha_1,b_1) +(1-q) u_1(\alpha_2,b_2) = u_1(a^*,b^*),
\end{equation}
Compared to $F^*(u_1,u_2)$, the objective function and the first constraint remains the same, but the inequality constraint (\ref{3.5}) is replaced by the equality constraint (\ref{B.3}).
\begin{Lemma}\label{LB.1}
Suppose $(u_1,u_2)$ satisfies Assumptions 1 and 2, then $F^{**}(u_1,u_2)=F^*(u_1,u_2)$.
\end{Lemma}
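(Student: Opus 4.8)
The claim is that $F^{**}(u_1,u_2)=F^*(u_1,u_2)$, i.e., relaxing the binding equality constraint $(\ref{B.3})$ to the inequality $(\ref{3.5})$ does not lower the minimal value of $q\alpha_1(a^*)+(1-q)\alpha_2(a^*)$. Since any feasible point for $F^{**}$ is feasible for $F^*$, the inequality $F^*(u_1,u_2)\leq F^{**}(u_1,u_2)$ is immediate. The plan is to prove the reverse inequality $F^{**}(u_1,u_2)\leq F^*(u_1,u_2)$ by taking an optimal (or near-optimal) solution $(\alpha_1,\alpha_2,b_1,b_2,q)$ of the $F^*$ program and modifying it so that the payoff constraint binds without increasing the objective.

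Here is the modification I would use. Suppose the chosen solution has $q u_1(\alpha_1,b_1)+(1-q)u_1(\alpha_2,b_2) > u_1(a^*,b^*)$ strictly. The idea is to tilt one of the mixed actions toward an action profile with a lower payoff, which also weakly decreases the probability weight on $a^*$. Concretely, invoke the first part of Assumption $\ref{Ass2}$: since $a^*\notin\textrm{BR}_1(b^*)$ there is $a'\neq a^*$ with $u_1(a',b^*) > u_1(a^*,b^*)$, and because $a^*$ is the unique Stackelberg action, player $2$ has a best reply $b'$ to some mixture putting weight on $a'$ giving player $1$ strictly less than $u_1(a^*,b^*)$; more simply, the minmax value satisfies $\underline v_1 < u_1(a^*,b^*)$, so there exists an incentive-compatible profile $(\tilde\alpha,\tilde b)\in\Gamma$ with $u_1(\tilde\alpha,\tilde b) < u_1(a^*,b^*)$ and one may take $\tilde\alpha$ to not place more weight on $a^*$ than is needed (indeed $\tilde\alpha=\underline a$ type profiles, or the minmaxing profile, can be chosen with $\tilde\alpha(a^*)$ small). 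Replace $(\alpha_2,b_2)$ by a convex combination: route a fraction of the weight $(1-q)$ onto $(\tilde\alpha,\tilde b)$ — formally redefine $q$ and the second component — continuously moving the average payoff down until $(\ref{B.3})$ holds with equality. Since the objective is linear in the weights and $\tilde\alpha(a^*)$ can be taken no larger than $\alpha_2(a^*)$, this reallocation does not increase $q\alpha_1(a^*)+(1-q)\alpha_2(a^*)$. By continuity of the payoff in the mixing weight, the intermediate value theorem guarantees a weight at which the constraint binds exactly.

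A cleaner way to organize the argument, avoiding case analysis over which profile to tilt, is the following. Let $v^\circ \equiv q u_1(\alpha_1,b_1)+(1-q)u_1(\alpha_2,b_2)\geq u_1(a^*,b^*)$ and let $(\tilde\alpha,\tilde b)\in\Gamma$ satisfy $u_1(\tilde\alpha,\tilde b)\leq \underline v_1 < u_1(a^*,b^*)$ with $\tilde\alpha(a^*)$ minimal among minmaxing profiles. Consider the two-point distribution that plays $(\alpha_1,b_1)$ with probability $q$, $(\alpha_2,b_2)$ with probability $(1-q)$, mixed with probability $\lambda$ on $(\tilde\alpha,\tilde b)$; i.e., play the original optimal mixture with probability $1-\lambda$ and $(\tilde\alpha,\tilde b)$ with probability $\lambda$. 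For $\lambda=0$ the average payoff is $v^\circ\geq u_1(a^*,b^*)$ and for $\lambda=1$ it is $u_1(\tilde\alpha,\tilde b)< u_1(a^*,b^*)$, so some $\lambda^\star\in[0,1)$ makes it equal to $u_1(a^*,b^*)$. This is a three-point mixture, but Lemma $\ref{LA.2}$/Lemma $\ref{LA.3}$ (invoked in the excerpt) reduce any mixture over $\Gamma$ with a binding payoff constraint to a two-point mixture without increasing the frequency of $a^*$, placing us back in the feasible set of $F^{**}$. Finally, I would check the objective does not increase: the three-point mixture's probability on $a^*$ is $(1-\lambda^\star)\big(q\alpha_1(a^*)+(1-q)\alpha_2(a^*)\big)+\lambda^\star\tilde\alpha(a^*)$, which is $\leq q\alpha_1(a^*)+(1-q)\alpha_2(a^*)$ precisely when $\tilde\alpha(a^*)\leq q\alpha_1(a^*)+(1-q)\alpha_2(a^*)$; if this fails, the original solution already had $a^*$-probability below $\tilde\alpha(a^*)$, and in that degenerate case one argues directly (the objective is already small, or one replaces $(\alpha_1,b_1)$ rather than adding $\tilde\alpha$).

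The main obstacle is handling the objective-monotonicity bookkeeping cleanly: one must ensure the payoff-reducing profile $(\tilde\alpha,\tilde b)$ one mixes in does not carry more weight on $a^*$ than the current solution, and argue that this can always be arranged — this is where the uniqueness of the Stackelberg action and $\underline v_1 < u_1(a^*,b^*)$ from Assumption $\ref{Ass2}$ do real work (they guarantee a low-payoff incentive-compatible profile exists and it need not be $a^*$-heavy). Once that is in hand, the rest is continuity plus the reduction Lemmas $\ref{LA.2}$ and $\ref{LA.3}$, both of which I am entitled to use.
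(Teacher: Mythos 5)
Your overall strategy is the same as the paper's: take a solution of the $F^*$ program with a slack payoff constraint, mix in an incentive-compatible profile whose payoff is strictly below $u_1(a^*,b^*)$ until the constraint binds, and then invoke Lemma \ref{LA.3} to collapse the resulting three-point mixture back to two points. The easy direction and the intermediate-value step are fine.

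The one place where your write-up does not close is exactly the spot you flag as ``the main obstacle'': you need the low-payoff profile $(\tilde\alpha,\tilde b)$ you mix in to satisfy $\tilde\alpha(a^*)\leq q\alpha_1(a^*)+(1-q)\alpha_2(a^*)$, and your chosen profile (a minmax-related element of $\Gamma$, or ``$\tilde\alpha(a^*)$ minimal among minmaxing profiles'') carries no such guarantee -- the minmaxing $\alpha$ could well be concentrated on $a^*$. Your fallback (``the objective is already small, or one replaces $(\alpha_1,b_1)$ rather than adding $\tilde\alpha$'') is not an argument: ``small'' is not what is needed, and replacing $(\alpha_1,b_1)$ raises the same issue. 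The paper avoids the problem entirely by a different choice of the low-payoff profile, which you actually mention in passing but then abandon: since $a^*$ is the \emph{unique} Stackelberg action (Assumption \ref{Ass1}), every pure $a'\neq a^*$ has some $b'\in\textrm{BR}_2(a')$ with $u_1(a',b')<u_1(a^*,b^*)$; the pure profile $(a',b')\in\Gamma$ then places \emph{zero} weight on $a^*$, so mixing it in weakly decreases the objective with no case analysis. Substituting this $(a',b')$ for your $(\tilde\alpha,\tilde b)$ repairs the proof; as written, the degenerate case is unresolved.
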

\begin{proof}
The part in which $F^{**}(u_1,u_2) \geq F^*(u_1,u_2)$ is straightforward. Next, we show  $F^{**}(u_1,u_2) \leq F^*(u_1,u_2)$. 
Suppose the constrained minimum in (\ref{3.3}) is attained by $\{\alpha_1,\alpha_2,b_1,b_2,q\}$ where $q u_1(\alpha_1,b_1)+(1-q) u_1(\alpha_2,b_2) > u_1(a^*,b^*)$.
Since $a^*$ is player $1$'s unique Stackelberg action, for every $a' \neq a^*$, there exists $b' \in \textrm{BR}_2(a')$ such that $u_1(a',b')<u_1(a^*,b^*)$. Let $r \in [0,1]$ be defined via:
\begin{equation*}
    r u_1(a',b') +(1-r) \Big(u_1(\alpha_1,b_1)+(1-q) u_1(\alpha_2,b_2)\Big) = u_1(a^*,b^*).
\end{equation*}
Consider an alternative distribution $q' \in \Delta (\Gamma)$ that attaches probability $r$ to $(a',b')$, probability $(1-r)q$ to $(\alpha_1,b_1)$, and probability $rq$ to $(\alpha_2,b_2)$.
The probability of $a^*$ is weakly lower under $q'$ compared to that under $q$, and constraint (\ref{3.5}) is binding.  Later on, we show in Lemma \ref{LA.3}  that there exists a distribution over incentive compatible action profiles supported on two elements under which constraint (\ref{3.4}) is satisfied,  constraint (\ref{3.5}) is binding, and attains the constrained minimum. Therefore, $F^{**}(u_1,u_2) \leq F^*(u_1,u_2)$.
\end{proof}

In the remainder of this appendix, we show that the equilibrium constructed in Appendix  \ref{sec5}
achieves the desired (discounted) action frequencies.
We first define the parameters used in the construction of the equilibrium. 
Let $\varepsilon_1>0$ be a small positive real number,
and let $Z_1 = \commitu - u_1(a, b)$ be a random variable that
\begin{itemize}
    \item equals $\commitu - u_1(a^*,b^*)$ 
    with probability $\varepsilon_1 \alpha' (a^*)$,
    \item equals $\commitu- u_1(a',b^*)$ with probability $\varepsilon_1 \alpha'(a')$,
    \item with probability $1-\varepsilon_1$,
    equals $\commitu - u_1(a,b)$ where $(a,b)$ is distributed according to $q$.
\end{itemize}
One can verify that $Z_1$ has finite support and $\expect{Z_1} < 0$. 
Let $r^*_1 > 0$ be the smallest real number such that $\expect[z\sim Z_1]{\exp(r^*_1 \cdot z)} = 1$.\footnote{Here we consider the case that the random variable $Z_1$ takes positive value with positive probability. 
As will become clearer in the analysis, 
the case when $Z_1$ only has non-positive support is trivial. 
We made the same assumption for $Z_2$ as well.} 
Similarly, let $Z_2 = u_1(a, b) - \epsilon_1$ be the random variable that:
\begin{itemize}
    \item equals $u_1(a^*,b^*)-\epsilon_1$ with probability $\epsilon_1 \mixaction'(a^*)$,
    \item equals $u_1(a',b^*)-\epsilon_1$ with probability $\epsilon_1 \mixaction'(a')$,
    \item with probability $1-\epsilon_1$, equals $u_1(a,b)-\epsilon_1$ where $(a,b)$ is distributed according to $q$.
\end{itemize}
Let $r^*_2 > 0$ be the smallest real number such that $\expect[z\sim Z_2]{\exp(r^*_2 \cdot z)} = 1$.
Let $\overline{M} \equiv \max_{(a,b) \in A \times B} u_1(a,b)$
and let $T_1 = \lceil\frac{\overline{M}+c}{u_1(\deviatea,b^*) - \commitu}\rceil$
where $c \in \mathbb{R}_+$ is such that $\exp(-\min\{r^*_1, r^*_2\} \cdot c) \leq \epsilon_1$. 
Next we introduce several minor changes in the construction of the equilibrium in Appendix \ref{sec5} to simplify the exposition. 
\begin{itemize}
\item We impose a universal upper bound on the length of each absorbing subphase
as $\bar{T}_2 \equiv \lceil\frac{\ln (1-\epsilon_1)}{\ln \delta}\rceil$,
and let $T_2 \leq \bar{T}_2$ be the stopping time of the absorbing subphase.\footnote{$T_2$ is the number of period in the current absorbing subphase, not the time horizon. }

\item Letting $T_0$ be the starting time of the absorbing subphase, 
$T_2 \leq \bar{T}_2$ is the smallest interger that satisfies 
\begin{equation*}
        \sum_{t=0}^{T_2} \delta^t u_1(a_{t+T_0},b_{t+T_0}) < (1-\delta^{T_2+1}) u_1(a^*,b^*)-c(1-\delta)
    \end{equation*}
    or
    \begin{equation*}
      \sum_{t=0}^{T_2} \delta^t u_1(a_{t+T_0},b_{t+T_0}) 
       > (1-\delta^{T_2+1}) \Big(\varepsilon_1 u_1(\alpha',b^*) +(1-\varepsilon_1) \mathbb{E}_{(\alpha,b) \sim q} [u_1(\alpha,b)]\Big) + c(1-\delta).  
    \end{equation*}
\end{itemize}
The second bulletin point defines the stopping criterion based on the discounted average payoff within the absorbing subphase.
Moreover, the first inequality in the second bulletin is consistent with the constraint that the discounted average payoff from time $0$ to $T$ is above $u_1(a^*, b^*)$
because the accumulated payoff in the first $T_1$ periods of the current block is sufficiently high when we start the absorbing subphase.


Next we prove Statement 1 of Theorem 2 with the above parameters constructed in the equilibrium 
when $\delta >\overline{\delta}$ with
\begin{equation}
    \bar{\delta} = \max\left\{\frac{\ln(1-\epsilon_1^3)}{\ln T_1}, 1-\epsilon_1^2\right\}.
\end{equation}

In the equilibrium constructed in Appendix \ref{sec5}, 
the discounted payoff for player 1 in each block equals 
$(1-\delta^T)\commitu$, in which
$T \in \mathbb{N}$ is the number of time periods in the block. 
This implies that the strategic type has an incentive to play the mixed action in the beginning of the game to separate from the commitment type. 
In addition, one can verify that player $1$ has no incentive to make any off-path deviations, since his expected continuation value at every on-path history is strictly greater than
 $\minmaxu$ when
$\delta$ is sufficiently close to 1.

Let $\event_1$ be the event that 
player $1$'s discounted payoff in the absorbing subphase is less than 
 $(1-\delta^t)\commitu-c(1-\delta)$. 
Let $\event_2$ be the event that 
player $1$'s discounted payoff in the absorbing subphase is more than
$(1-\delta^t)(\epsilon_1 u_1(\mixaction',b^*) 
+ (1-\epsilon_1)\expect[(\mixaction,b)\sim q]{u_1(\mixaction,b)}
+\epsilon_1) +c(1-\delta)$.
First, the probability that event $\event_1$ happens 
is bounded from above
by the probability that 
$\sum_{t=1}^n \delta^t z_{1;t}$ is greater than $c$ for some $n \geq 1$
where $z_{1;t} \sim Z_1$ for all~$t$. 
According to Lemma \ref{lem:concentration}, 
the latter probability is bounded from above by $\exp(-r^*_1\cdot c) \leq \epsilon_1$, 
which implies that $\Pr[\event_1] \leq \epsilon_1$. 
Similarly, we have $\Pr[\event_2] \leq \epsilon_1$.
Let $\event_3$ be the event that action profile $(\deviatea, b^*)$ is observed for $T_1$ periods, 
and by definition we have $\Pr[\event_3] = \q^{T_1}$.

We first show that $G^{\strategy_1,\strategy_2}(a)
\leq \mixaction^*(a) + \epsilon$
for every $a \in A$. 
Let $G$ denote the discounted number of times action $a$ is chosen from the beginning of each block. 
By construction, we have 
\begin{align*}
G &\leq (1-\delta^{T_1}) 
+ (1-\q^{T_1}\cdot (1-2\epsilon_1))
\cdot \delta^{T_1} G
+ (1-2\epsilon_1)\cdot \q^{T_1} \delta^{T_1+\bar{T}_2} G
+ \q^{T_1} \delta^{T_1} (1-\delta^{\bar{T}_2}) 
(\epsilon_1 + (1-\epsilon_1)\mixaction^*(a))\\
\Rightarrow G &\leq 
\frac{1-\delta^{T_1} 
+ \q^{T_1} \delta^{T_1} (1-\delta^{\bar{T}_2})
(\epsilon_1 + (1-\epsilon_1)\mixaction^*(a))}
{(1-2\epsilon_1)(1-\delta^{\bar{T}_2})\delta^{T_1}\q^{T_1} + (1-\delta^{T_1})} 
\leq \frac{\mixaction^*(a) + \epsilon_1}{1-2\epsilon_1}.
\end{align*}
The first term in the first inequality is the upper bound on the discounted number of times action $a$ is chosen
from period $1$ to $T_1$; 
the second term is the upper bound on the discounted number of times action $a$ is chosen in future blocks 
conditional on event $(\event_1\cup\event_2)$ happens;
the third term is the upper bound on the discounted number of times action $a$ is chosen in future blocks 
conditional on event $\neg(\event_1\cup\event_2)$,
and the last term is the upper bound on the discounted number of times action $a$ is chosen in the absorbing subphase.
The second inequality holds by rearranging terms. 
By setting  $\epsilon_1 \ll \q^{T_1}$, 
the last inequality holds 
since $1-\delta^{T_1} \leq \epsilon_1^3$ and $1-\delta^{\bar{T}_2} \approx \epsilon_1$.
Therefore,
\begin{align*}
\mathbb{E}^{(\sigma_1,\sigma_2)} \Big[
    \sum_{t=0}^{\infty} (1-\delta)\delta^t \mathbf{1}\{a_{t}=a\}
    \Big]
&\leq \sum_{t=0}^\infty \q (1-\q)^t
\left(1-\delta^t + \delta^t G\right)\\
&= \frac{(1-\q)(1-\delta)}{1-(1-\q)\delta} 
+ \frac{\mixaction^*(a) + \epsilon_1}{(1-(1-\q)\delta)(1-2\epsilon_1)}
\leq \mixaction^*(a) + \epsilon. 
\end{align*}
where the last inequality holds for sufficiently small $0<\epsilon_1 \ll \epsilon$.

Next we show that $G^{\strategy_1,\strategy_2}(a)
\geq \mixaction^*(a) - \epsilon$
for every $a \in A$. 
First, we provide an upper bound for the stopping time $T$.
Conditional on event $\event_2 \cap \event_3$, 
the stopping time $T$ satisfies 
\begin{align}
(1-\delta^{T_1+T_2})&\overline{M}
+ \delta^{T_1+T_2} (1-\delta^{T-T_1-T_2}) u_1(a',b') 
\geq (1-\delta^{T}) \commitu&\nonumber\\
\Rightarrow
\delta^{T}
&\geq 
\frac{\commitu-\delta^{T_1+T_2}(u_1(\mixaction',b')
- (1-\delta^{T_1+T_2})\overline{M}}
{\commitu-u_1(\mixaction',b')}\nonumber\\
&\geq \delta^{T_1+T_2} -  \frac{(1-\delta^{T_1+T_2})\overline{M}}
{\commitu-u_1(\mixaction',b')}
\geq \delta^{T_1+\bar{T}_2} -  \frac{(1-\delta^{T_1+\bar{T}_2})\overline{M}}
{\commitu-u_1(\mixaction',b')}\label{eq:t1}
\end{align}
Conditional on event $(\neg\event_2) \cap \event_3$, 
the stopping time $T$ satisfies 
\begin{align}
(1-\delta^{T_1})&\overline{M}
+\delta^{T_1}(1-\delta^{T_2})(\epsilon_1 u_1(\mixaction',b^*)
+ (1-\epsilon_1)\expect[(\mixaction,b)\sim q]{u_1(\mixaction,b)}
+\epsilon_1) \nonumber\\
&+c(1-\delta)
+ \delta^{T_1+T_2} (1-\delta^{T-T_1-T_2}) u_1(a',b') 
\geq (1-\delta^{T}) \commitu&\nonumber\\
\Rightarrow
\delta^{T}
&\geq 
\frac{\delta^{T_1+T_2}(\commitu-(u_1(\mixaction',b'))
- (1-\delta^{T_1})\overline{M} - c(1-\delta) - \delta^{T_1}(1-\delta^{T_2})(\epsilon_1 u_1(\mixaction',b^*)+\epsilon_1)}
{\commitu-u_1(\mixaction',b')}\nonumber\\
&\geq \delta^{T_1+T_2} -  \frac{\epsilon^2_1(1-\delta^{T_1+T_2})(2\overline{M}+c)}
{\commitu-u_1(\mixaction',b')}
\geq \delta^{T_1+\bar{T}_2} -  \frac{\epsilon^2_1(1-\delta^{T_1+\bar{T}_2})(2\overline{M}+c)}
{\commitu-u_1(\mixaction',b')}\label{eq:t2}
\end{align}
Conditional on event $\neg \event_3$,
the stopping time $T$ satisfies 
\begin{align}
(1-\delta^{T_1})&\overline{M}
+ \delta^{T_1} (1-\delta^{T-T_1}) u_1(a',b') 
\geq (1-\delta^{T}) \commitu&\nonumber\\
\Rightarrow
\delta^{T}
&\geq 
\frac{\commitu-\delta^{T_1}u_1(\mixaction',b')
- (1-\delta^{T_1})\overline{M}}
{\commitu-u_1(\mixaction',b')}\nonumber\\
&\geq \delta^{T_1} -  \frac{(1-\delta^{T_1})\overline{M}}
{\commitu-u_1(\mixaction',b')}\label{eq:t3}
\end{align}


Let $G$ denote the discounted number of times action $a$ is chosen in each block. 
By construction, we have 
\begin{align*}
G &\geq (1-\q^{T_1})
(\delta^{T_1} -  \frac{(1-\delta^{T_1})\overline{M}}
{\commitu-u_1(\mixaction',b')}) G
+ \q^{T_1}(1-\epsilon_1) 
(\delta^{T_1+\bar{T}_2} -  \frac{\epsilon^2_1(1-\delta^{T_1+\bar{T}_2})(2\overline{M}+c)}
{\commitu-u_1(\mixaction',b')}) G \nonumber\\
&\quad + \q^{T_1} \epsilon_1 
(\delta^{T_1+\bar{T}_2} -  \frac{(1-\delta^{T_1+\bar{T}_2})\overline{M}}
{\commitu-u_1(\mixaction',b')}) G
+ \q^{T_1} \delta^{T_1} (1-\delta^{\bar{T}_2}) 
(1-\epsilon_1)\mixaction^*(a)\\
\Rightarrow G &\geq 
\frac{\q^{T_1} \delta^{T_1} (1-\delta^{\bar{T}_2})
(1-\epsilon_1)\mixaction^*(a)}
{\q^{T_1}\delta^{T_1}(1-\delta^{\bar{T}_2}) +  O(\epsilon^2_1)} 
\geq \frac{\mixaction^*(a) (1-\epsilon_1)}{1+O(\epsilon_1)}.
\end{align*}
The first term in the first inequality is the lower bound 
on the discounted number of times action $a$ is chosen in future blocks 
conditional on event $\neg\event_3$;
the second term is the lower bound on the discounted number of times action $a$ is chosen in future blocks 
conditional on event $\event_3\cap(\neg\event_2)$;
the third term is the lower bound on the discounted number of times action $a$ is chosen in future blocks 
conditional on event $\event_3\cap\event_2$;
and the last term is the lower bound on the discounted number of times action $a$ is chosen in absorbing subphase.
Finally, we have 
\begin{align*}
\mathbb{E}^{(\sigma_1,\sigma_2)} \Big[
    \sum_{t=0}^{\infty} (1-\delta)\delta^t \mathbf{1}\{a_{t}=a\}
    \Big]
&\geq \sum_{t=0}^\infty \q (1-\q)^t \delta^t G\\
&= \frac{\mixaction^*(a) (1-\epsilon_1)}
{(1-(1-\q)\delta)(1+O(\epsilon_1))}
\geq \mixaction^*(a) - \epsilon. 
\end{align*}
where the last inequality holds when $\epsilon_1$ is sufficiently small compared to $\epsilon$.
Combining these bounds, we have 
\begin{align*}
\abs{
\mathbb{E}^{(\sigma_1,\sigma_2)} \Big[
    \sum_{t=0}^{\infty} (1-\delta)\delta^t \mathbf{1}\{a_{t}=a\}
    \Big]
    - \mixaction^*(a)}
\leq \epsilon \quad \textrm{for every} \quad a \in A.
\end{align*}

\section{Proofs of Statement 2 of Theorems 1 and 2}\label{secA}
First, we establish Statement 2 of Theorem \ref{Theorem1}. Let $\Delta (\Gamma)$ be the set of probability distributions on $\Gamma$ 
whose support has countable number of elements. 
Let $F(u_1,u_2,\varepsilon)$ be the value of the following constrained optimization problem:
\begin{equation}\label{A.1}
F(u_1,u_2,\varepsilon) \equiv \inf_{p \in \Delta (\Gamma)}    \int \alpha (a^*) d p(\alpha,b),
\end{equation}
subject to
\begin{equation}\label{A.2}
    \int u_1(\alpha,b) d p(\alpha,b) \geq u_1(a^*,b^*)-\varepsilon.
\end{equation}
Our proof of the necessity part of Theorem \ref{Theorem1} consists of three lemmas.
\begin{Lemma}\label{LA.1}
For every $\pi>0$ and $\varepsilon>0$, there exists $\underline{\delta} \in (0,1)$ such that for every $\delta >\underline{\delta}$,
\begin{equation}\label{A.3}
 G^{(\sigma_1,\sigma_2)}(a^*) \geq F(u_1,u_2,\varepsilon)-(1-\underline{\delta}) \quad \textrm{for every} \quad (\sigma_1,\sigma_2) \in \textrm{NE}(\delta,\pi).
\end{equation}
\end{Lemma}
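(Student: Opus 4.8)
The plan is to combine the Fudenberg--Levine payoff bound (inequality (\ref{2.1})) with a martingale estimate on player~$2$'s posterior about the commitment type. Fix $\pi$ and $\varepsilon$ and an arbitrary $(\sigma_1,\sigma_2)\in\textrm{NE}(\delta,\pi)$, and work under $\mathbb{E}^{(\sigma_1,\sigma_2)}$ (so player~$1$ is the strategic type). Write $\alpha_t\equiv\sigma_1(h^t)$ for the strategic type's action at $h^t$, $b_t$ for player~$2$'s realized action, and $\hat\pi_t\equiv\Pr[\omega^c\mid h^t]$ for the posterior. Because player~$2$ is myopic and best replies to the average action she faces, the profile $\beta_t\equiv\hat\pi_t\,a^*+(1-\hat\pi_t)\alpha_t$ satisfies $(\beta_t,b_t)\in\Gamma$. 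Let $\tilde p$ be the discounted occupation measure on $\Gamma$ given by $\tilde p(E)\equiv\mathbb{E}^{(\sigma_1,\sigma_2)}\big[\sum_{t\ge0}(1-\delta)\delta^t\mathbf 1\{(\beta_t,b_t)\in E\}\big]$; it is a Borel probability measure on $\Gamma$, which is compact since $\textrm{BR}_2$ is upper hemicontinuous.

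Next I would record two bookkeeping identities. From $\beta_t(a^*)=\alpha_t(a^*)+\hat\pi_t(1-\alpha_t(a^*))$,
\begin{equation*}
\int\alpha(a^*)\,d\tilde p(\alpha,b)=G^{(\sigma_1,\sigma_2)}(a^*)+D,\qquad D\equiv\mathbb{E}^{(\sigma_1,\sigma_2)}\Big[\textstyle\sum_{t\ge0}(1-\delta)\delta^t\,\hat\pi_t\big(1-\alpha_t(a^*)\big)\Big].
\end{equation*}
From $u_1(\beta_t,b_t)=u_1(\alpha_t,b_t)+\hat\pi_t\big(u_1(a^*,b_t)-u_1(\alpha_t,b_t)\big)$ and $\big|u_1(a^*,b_t)-u_1(\alpha_t,b_t)\big|\le C(1-\alpha_t(a^*))$ with $C\equiv\max u_1-\min u_1$, we get $\int u_1(\alpha,b)\,d\tilde p(\alpha,b)\ge V^s-CD$, where $V^s$ is the strategic type's equilibrium payoff. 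By inequality (\ref{2.1}) applied with $\varepsilon/2$ in place of $\varepsilon$, there is $\underline\delta_1$ such that $V^s\ge u_1(a^*,b^*)-\varepsilon/2$ whenever $\delta>\underline\delta_1$.

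The crux is the estimate $D\le1-\delta$. Conditional on $\omega^s$, at any positive-probability history $\hat\pi_t=0$ unless $a_0=\cdots=a_{t-1}=a^*$; write $\mathcal P_t$ for that event and $m_t\equiv\Pr^{(\sigma_1,\sigma_2)}[\mathcal P_t]$. Since $\hat\pi_t\le1$ on $\mathcal P_t$,
\begin{equation*}
\mathbb{E}^{(\sigma_1,\sigma_2)}\big[\hat\pi_t(1-\alpha_t(a^*))\big]\le\mathbb{E}^{(\sigma_1,\sigma_2)}\big[\mathbf 1_{\mathcal P_t}(1-\alpha_t(a^*))\big]=m_t-m_{t+1},
\end{equation*}
where the last equality uses the tower property, $m_{t+1}=\mathbb{E}^{(\sigma_1,\sigma_2)}[\mathbf 1_{\mathcal P_t}\alpha_t(a^*)]$. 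Hence $D\le(1-\delta)\sum_{t\ge0}(m_t-m_{t+1})=(1-\delta)(1-\lim_t m_t)\le1-\delta$.

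Finally I would assemble the pieces. Set $\underline\delta\equiv\max\{\underline\delta_1,\,1-\varepsilon/(2C)\}$ (the case $C=0$ being trivial). For $\delta>\underline\delta$, $\int u_1(\alpha,b)\,d\tilde p(\alpha,b)\ge u_1(a^*,b^*)-\varepsilon/2-C(1-\delta)\ge u_1(a^*,b^*)-\varepsilon$, so $\tilde p$ satisfies constraint (\ref{A.2}); since the infimum defining $F(u_1,u_2,\varepsilon)$ over countable-support distributions equals the infimum over all Borel probability measures on $\Gamma$ satisfying (\ref{A.2}) (a Carath\'eodory argument shows a two-point support attains it), we get $\int\alpha(a^*)\,d\tilde p(\alpha,b)\ge F(u_1,u_2,\varepsilon)$. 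Combining with the first identity and $D\le1-\delta$, $G^{(\sigma_1,\sigma_2)}(a^*)=\int\alpha(a^*)\,d\tilde p(\alpha,b)-D\ge F(u_1,u_2,\varepsilon)-(1-\delta)\ge F(u_1,u_2,\varepsilon)-(1-\underline\delta)$. The main obstacle is the belief-concentration bound $D\le1-\delta$: it is what delivers uniformity over \emph{all} equilibria, and it rests on the observation that the posterior on the commitment type survives only along the all-$a^*$ path, so its discounted contribution telescopes; the only other delicate point is identifying the countable-support and the Borel versions of $F(u_1,u_2,\varepsilon)$, which I would dispatch with Carath\'eodory's theorem.
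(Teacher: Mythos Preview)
Your argument is correct and takes a genuinely different route from the paper. The paper conditions on the first period in which the strategic type plays an action other than $a^*$: before that moment every period contributes $1$ to the frequency of $a^*$, and after it player~$2$'s posterior puts probability zero on the commitment type, so $(\sigma_1(h^s),b)\in\Gamma$ directly and the continuation occupation measure (your $p_{(h^t,a,b)}$ in (\ref{A.5})) satisfies (\ref{A.2}) by the Fudenberg--Levine bound; summing over separation times yields (\ref{A.8}). You instead work globally with the \emph{mixture} action $\beta_t=\hat\pi_t a^*+(1-\hat\pi_t)\alpha_t$ that player~$2$ actually best replies to, so $(\beta_t,b_t)\in\Gamma$ holds at every history, and you absorb the discrepancy between $\beta_t(a^*)$ and $\alpha_t(a^*)$ into the correction term $D$, which telescopes to at most $1-\delta$ because $\hat\pi_t>0$ only along the all-$a^*$ path. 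The paper's decomposition is slightly more direct in that it never needs to compare countable-support and Borel versions of $F(u_1,u_2,\varepsilon)$; your approach avoids the case split on separation time and packages the ``pre-separation'' contribution neatly into $D$, at the cost of that extra Carath\'eodory step (which is in any case the content of the paper's Lemma~\ref{LA.3}). Two small points of presentation: in your telescoping you silently bound $\delta^t\le 1$ before summing $m_t-m_{t+1}$, which is fine since each increment is nonnegative; and your $\tilde p$ may fail to have countable support only because of the public randomization $\xi_t$, which is exactly why the Carath\'eodory reduction is needed.
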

\begin{Lemma}\label{LA.2}
For every $u_1$ and $u_2$ that satisfy Assumptions \ref{Ass1} and \ref{Ass2}, $\lim_{\varepsilon \downarrow 0} F(u_1,u_2,\varepsilon)=F(u_1,u_2,0)$.
\end{Lemma}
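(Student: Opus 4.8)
The plan is to sandwich $F(u_1,u_2,0)$ between $\limsup_{\varepsilon\downarrow 0}F(u_1,u_2,\varepsilon)$ and $\liminf_{\varepsilon\downarrow 0}F(u_1,u_2,\varepsilon)$. One side is essentially free from monotonicity: replacing $u_1(a^*,b^*)$ by $u_1(a^*,b^*)-\varepsilon$ on the right-hand side of (\ref{A.2}) only enlarges the feasible set of the program in (\ref{A.1}), so $F(u_1,u_2,\varepsilon)\le F(u_1,u_2,0)$ for every $\varepsilon>0$, and more generally $F(u_1,u_2,\cdot)$ is nonincreasing; hence $\lim_{\varepsilon\downarrow 0}F(u_1,u_2,\varepsilon)$ exists (it is bounded inside $[0,1]$) and is at most $F(u_1,u_2,0)$. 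Everything therefore reduces to right-continuity from below, i.e.\ $\lim_{\varepsilon\downarrow 0}F(u_1,u_2,\varepsilon)\ge F(u_1,u_2,0)$.

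For that direction I would run a perturbation argument. Fix $\varepsilon>0$, $\eta>0$, and a feasible $p\in\Delta(\Gamma)$ for the $\varepsilon$-program with $\int\alpha(a^*)\,dp(\alpha,b)\le F(u_1,u_2,\varepsilon)+\eta$ and $\int u_1(\alpha,b)\,dp(\alpha,b)\ge u_1(a^*,b^*)-\varepsilon$. The idea is to shift a vanishing amount of mass onto an incentive compatible profile whose payoff strictly exceeds $u_1(a^*,b^*)$, restoring the tighter constraint while barely moving the objective. Such a profile is exactly the $(\alpha',b^*)$ from the equilibrium construction in Appendix \ref{sec5}: by the first part of Assumption \ref{Ass2} ($a^*\notin\textrm{BR}_1(b^*)$), Assumption \ref{Ass1}, and upper hemicontinuity of $\textrm{BR}_2$, there is $\alpha'\in\Delta\{a^*,a'\}$ with $\textrm{BR}_2(\alpha')=\{b^*\}$ --- so $(\alpha',b^*)\in\Gamma$ --- and with $\theta\equiv u_1(\alpha',b^*)-u_1(a^*,b^*)>0$. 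Let $p'\in\Delta(\Gamma)$ attach probability $\lambda\equiv\varepsilon/(\varepsilon+\theta)$ to the single profile $(\alpha',b^*)$ and probability $1-\lambda$ distributed as $p$; adding one atom keeps the support countable, so $p'$ is admissible. Then $\int u_1\,dp'\ge(1-\lambda)(u_1(a^*,b^*)-\varepsilon)+\lambda u_1(\alpha',b^*)=u_1(a^*,b^*)$, so $p'$ is feasible for the $0$-program, while $\int\alpha(a^*)\,dp'\le\int\alpha(a^*)\,dp+\lambda\le F(u_1,u_2,\varepsilon)+\eta+\lambda$. Hence $F(u_1,u_2,0)\le F(u_1,u_2,\varepsilon)+\eta+\varepsilon/(\varepsilon+\theta)$; sending $\eta\downarrow 0$ and then $\varepsilon\downarrow 0$ (using $\varepsilon/(\varepsilon+\theta)\to 0$) gives $F(u_1,u_2,0)\le\lim_{\varepsilon\downarrow 0}F(u_1,u_2,\varepsilon)$, which together with the monotonicity bound proves the claim. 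As a byproduct this yields the Lipschitz estimate $0\le F(u_1,u_2,0)-F(u_1,u_2,\varepsilon)\le\varepsilon/\theta$.

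I do not expect a substantive obstacle --- this is the textbook continuity-of-the-value argument for a linear program under a relaxed constraint --- but three points need care. First, one must get the monotonicity direction right: loosening (\ref{A.2}) enlarges, not shrinks, the feasible set, so the easy inequality is $F(u_1,u_2,\varepsilon)\le F(u_1,u_2,0)$. Second, the perturbation only works because $\Gamma$ contains a profile strictly above the Stackelberg payoff; this is precisely where Assumption \ref{Ass2} enters, and (consistently with the remark that Assumption \ref{Ass2} drives the $F^*<1$ conclusion) it is the only place it is used in this lemma --- reusing the already-constructed $\alpha'$ makes this transparent. Third, one must verify that $p'$ remains an admissible element of $\Delta(\Gamma)$, i.e.\ that it keeps countable support and that every profile in its support still lies in $\Gamma$; both are immediate since we only append the atom $(\alpha',b^*)\in\Gamma$.
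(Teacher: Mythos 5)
Your proposal is correct and follows essentially the same route as the paper: monotonicity gives $\lim_{\varepsilon\downarrow 0}F(u_1,u_2,\varepsilon)\le F(u_1,u_2,0)$ for free, and the reverse inequality is obtained by mixing a vanishing amount of mass onto an incentive compatible profile $(\alpha',b^*)\in\Gamma$ with $u_1(\alpha',b^*)>u_1(a^*,b^*)$ (guaranteed by Assumptions \ref{Ass1} and \ref{Ass2}) so as to restore feasibility in the $0$-program at negligible cost to the objective. The paper's only cosmetic difference is that it picks the appended atom outside the support of $p_{\varepsilon}$ and uses the weight $2\varepsilon/(\rho+2\varepsilon)$ rather than your $\varepsilon/(\varepsilon+\theta)$; the argument is otherwise identical.
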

\begin{Lemma}\label{LA.3}
For every $u_1$ and $u_2$ that satisfy Assumptions \ref{Ass1} and \ref{Ass2}, $F^*(u_1,u_2)=F(u_1,u_2,0)$.
\end{Lemma}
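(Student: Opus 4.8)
The plan is to prove the two inequalities separately. The inequality $F(u_1,u_2,0)\le F^*(u_1,u_2)$ is immediate: a five-tuple $(\alpha_1,\alpha_2,b_1,b_2,q)$ feasible for the program defining $F^*(u_1,u_2)$ in (\ref{3.3})--(\ref{3.5}) induces the two-point distribution $p\in\Delta(\Gamma)$ that puts mass $q$ on $(\alpha_1,b_1)$ and mass $1-q$ on $(\alpha_2,b_2)$; this $p$ satisfies (\ref{A.2}) with $\varepsilon=0$ and has objective value $q\alpha_1(a^*)+(1-q)\alpha_2(a^*)$. So the whole content is the reverse inequality $F^*(u_1,u_2)\le F(u_1,u_2,0)$: that an arbitrary countably supported distribution over $\Gamma$ can be replaced, without increasing the probability of $a^*$ and without violating the payoff constraint, by a distribution supported on at most two incentive-compatible action profiles.

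I would establish this by passing to the two-dimensional ``frequency--payoff'' projection
\[
S\equiv\big\{\big(\alpha(a^*),\,u_1(\alpha,b)\big)\ :\ (\alpha,b)\in\Gamma\big\}\subseteq\mathbb{R}^2 .
\]
Since $\Gamma$ is a closed, hence compact, subset of $\Delta(A)\times B$ and the map $(\alpha,b)\mapsto(\alpha(a^*),u_1(\alpha,b))$ is continuous, $S$ is compact and $\mathrm{conv}(S)$ is a compact convex subset of $\mathbb{R}^2$. The barycenter of any $p\in\Delta(\Gamma)$ lies in $\mathrm{conv}(S)$ (the barycenter of a countably supported probability measure on a compact set belongs to the convex hull of that set), and, by Carath\'eodory's theorem, every point of $\mathrm{conv}(S)$ is the barycenter of some finitely supported $p\in\Delta(\Gamma)$. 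Hence
\[
F(u_1,u_2,0)=\min\big\{\,x\ :\ (x,y)\in\mathrm{conv}(S),\ y\ge u_1(a^*,b^*)\,\big\},
\]
and this minimum is attained, since the feasible set is compact and contains the image $(1,u_1(a^*,b^*))$ of the profile $(a^*,b^*)\in\Gamma$.

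The crux, and the step I expect to be the main obstacle, is the Carath\'eodory-type reduction showing that the minimizer of this planar program is a convex combination of at most two points of $S$. Write $C\equiv\mathrm{conv}(S)\cap\{y\ge u_1(a^*,b^*)\}$; it is compact and convex, so $\min_C x$ is attained at an extreme point $e$ of $C$. If the $y$-coordinate of $e$ exceeds $u_1(a^*,b^*)$, the constraint is locally slack and $e$ is an extreme point of $\mathrm{conv}(S)$, hence $e\in S$ by the partial converse to the Krein--Milman theorem. Otherwise $e$ lies on the line $\{y=u_1(a^*,b^*)\}$, and then it must be an endpoint of the segment $\mathrm{conv}(S)\cap\{y=u_1(a^*,b^*)\}$, since a relative-interior point of that segment would not be extreme in $C$; such an endpoint lies on $\partial\,\mathrm{conv}(S)$, so using a supporting line $\ell$ of $\mathrm{conv}(S)$ at $e$, the identity $\mathrm{conv}(S)\cap\ell=\mathrm{conv}(S\cap\ell)$, and Carath\'eodory's theorem on the one-dimensional set $S\cap\ell$, we conclude that $e$ is a convex combination of at most two points of $S$. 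In either case $e=qe_1+(1-q)e_2$ with $e_j=(\alpha_j(a^*),u_1(\alpha_j,b_j))$ for some $(\alpha_j,b_j)\in\Gamma$ and $q\in[0,1]$; then $(\alpha_1,\alpha_2,b_1,b_2,q)$ satisfies (\ref{3.4}), satisfies (\ref{3.5}) because the $y$-coordinate of $e$ is at least $u_1(a^*,b^*)$, and has objective value equal to the $x$-coordinate of $e$, which is $F(u_1,u_2,0)$. This yields $F^*(u_1,u_2)\le F(u_1,u_2,0)$, completing the argument. I would also note that the same planar reduction, applied to the line $\{y=u_1(a^*,b^*)\}$ in place of the half-plane $\{y\ge u_1(a^*,b^*)\}$, is exactly the fact invoked in the proof of Lemma \ref{LB.1}: reducing a distribution whose induced payoff already equals $u_1(a^*,b^*)$ preserves both coordinates of the barycenter, so the resulting two-point distribution still makes (\ref{3.5}) bind.
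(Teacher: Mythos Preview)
Your proof is correct and takes a genuinely different route from the paper's. The paper argues the nontrivial inequality $F^*(u_1,u_2)\le F(u_1,u_2,0)$ by, for each $\eta>0$, taking a near-optimal countably supported $p_\eta$, forming the Lagrangian of the minimization over $\Delta(\Gamma_\eta)$, deducing from first-order conditions that all support points satisfy $\alpha(a^*)+\lambda\,u_1(\alpha,b)=\text{const}$, extracting a two-point distribution at the supremum and infimum of $u_1(\alpha,b)$ over the support, and finally sending $\eta\to 0$ through a compactness/subsequence argument in $(\Delta(A)\times B)^2$. You instead project $\Gamma$ to the planar set $S$, rewrite $F(u_1,u_2,0)$ as the linear program $\min\{x:(x,y)\in\mathrm{conv}(S),\,y\ge u_1(a^*,b^*)\}$, and then use that any extreme minimizer either lies in $S$ (slack constraint) or on a supporting line $\ell$ of $\mathrm{conv}(S)$, where one-dimensional Carath\'eodory on $S\cap\ell$ gives the two-point representation directly. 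The two arguments are dual in spirit---the slope $-1/\lambda$ in the paper's equation~(\ref{A.14}) is exactly the slope of your supporting line---but yours attains the minimum in one shot and dispenses with the auxiliary $\eta$-perturbations and the limiting step.
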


\begin{proof}[Proof of Lemma C.1:] The reputation result in \citet{FL-89} implies that for every $\pi>0$ and $\varepsilon>0$, there exists $\underline{\delta} \in (0,1)$ such that for every $\delta >\underline{\delta}$,
\begin{equation}\label{A.4}
 \mathbb{E}^{(\sigma_1,\sigma_2)}\Big[
    \sum_{t=0}^{+\infty} (1-\delta) \delta^t u_1(a_t,b_t)
    \Big] \geq u_1(a^*,b^*)- \varepsilon/2 \textrm{ for every } (\sigma_1,\sigma_2) \in \textrm{NE}(\delta,\pi).
\end{equation}
For given $(\sigma_1,\sigma_2) \in \textrm{NE}(\delta,\pi)$, let $\mathcal{H}^*$ be a set of on-path histories such that $h^t \in \mathcal{H}^*$ if and only if
\begin{itemize}
  \item $a^*$ was played from period $0$ to $t-1$,
\textit{and} $\sigma_1(h^t)$ assigns positive probability to actions other than $a^*$.
\end{itemize}
By construction, for every $h^t \in \mathcal{H}^*$,  player $2$'s posterior belief at $h^t$ assigns probability at least $\pi$ to the commitment type, and therefore, player $1$'s continuation value at $h^t$ is at least $u_1(a^*,b^*)-\varepsilon/2$. Let $\overline{M} \equiv \max_{(a,b) \in A \times B} u_1(a,b)$.
For every $a \in \textrm{supp}(\sigma_1(h^t)) \backslash \{a^*\}$ and $b \in \textrm{supp}(\sigma_2(h^t))$, player $1$'s continuation value at $(h^t,a,b)$, denoted by $v(h^t,a,b)$, satisfies:
\begin{equation*}
v(h^t , a,b) \geq \frac{1}{\delta} \Big(
u_1(a^*,b^*)-\frac{\varepsilon}{2}-(1-\delta)\overline{M}
\Big).
\end{equation*}
The right-hand-side is strictly greater than $u_1(a^*,b^*)-\varepsilon$ when $\delta$ is close enough to $1$. For every on-path history $h^s$ such that $h^s \succeq (h^t,a,b)$, player $2$ attaches probability $1$ to the rational type at $h^s$, and therefore, $\sigma_2(h^s)$ best replies against $\sigma_1(h^s)$. Therefore, 
$(\sigma_1(h^s),b) \in \Gamma$ for every $b \in \textrm{supp}(\sigma_2(h^s))$. Let $p_{(h^t,a,b)} \in \Delta (\Gamma)$ 
be a probability measure on $\Gamma$ such that for every $(\alpha,b) \in \Gamma$, 
  \begin{equation}\label{A.5}
    p_{(h^t,a,b)} (\alpha,b) \equiv \mathbb{E}^{(\sigma_1,\sigma_2)} \Big[
    \sum_{s=t+1}^{\infty} (1-\delta) \delta^{s-t-1} \mathbf{1}\{\sigma_1(h^s)=\alpha \} \sigma_2(b)\Big| (h^t,a,b)
    \Big].
  \end{equation}
By construction, $p_{(h^t,a,b)}$ has a countable number of elements in its support, and 
player $1$'s continuation value at $(h^t,a,b)$, denoted by $v(h^t,a,b)$, satisfies
\begin{equation}\label{A.6}
   v(h^t , a,b)= \int u_1(\alpha,b) d p_{(h^t,a,b)} (\alpha,b) \geq u_1(a^*,b^*)-\varepsilon.
\end{equation}
The definition of $F(u_1,u_2,\varepsilon)$ in (\ref{A.1}) and (\ref{A.2}) suggests that:
\begin{equation}\label{A.7}
G^{(h^t,a,b)}(a^*) \equiv    \mathbb{E}^{(\sigma_1,\sigma_2)} \Big[
    \sum_{s=t+1}^{\infty} (1-\delta) \delta^{s-t-1} \mathbf{1}\{a_s=a^*\} \Big| (h^t,a,b) \Big] \geq F(u_1,u_2,\varepsilon).
\end{equation}
Next, we compute a lower bound on $G^{(\sigma_1,\sigma_2)}(a^*)$. Let $\widehat{\mathcal{H}}$ be the set of on-path histories $h^t \equiv (h^{t-1},a_{t-1},b_{t-1})$ such that $t \geq 1$, $h^{t-1} \in \mathcal{H}^*$, and $a_{t-1} \neq a^*$. Let $p^{(\sigma_1,\sigma_2)} (h^t)$ be the ex ante probability of history $h^t$ under the probability measure induced by $(\sigma_1,\sigma_2)$. By definition,
    $1-\sum_{h^t \in \widehat{\mathcal{H}}} p^{(\sigma_1,\sigma_2)} (h^t)$
is the ex ante probability with which player $1$ plays $a^*$ in every period conditional on him being the rational type. Therefore, 
\begin{eqnarray}\label{A.8}
G^{(\sigma_1,\sigma_2)} (a^*) & = & \Big(1-\sum_{h^t \in \widehat{\mathcal{H}}} p^{(\sigma_1,\sigma_2)} (h^t)\Big)
+\sum_{h^t \in \widehat{\mathcal{H}}} p^{(\sigma_1,\sigma_2)} (h^t) \Big((1-\delta^{t-1})+\delta^t X^{(h^t)} (a^*)\Big)
{}
\nonumber\\
&\geq & {} -(1-\delta)+\Big(1-\sum_{h^t \in \widehat{\mathcal{H}}} p^{(\sigma_1,\sigma_2)} (h^t)\Big)
+\sum_{h^t \in \widehat{\mathcal{H}}} p^{(\sigma_1,\sigma_2)} (h^t) \Big((1-\delta^{t})+\delta^t X^{(h^t)} (a^*)\Big)
{}
\nonumber\\
&\geq & {}  F(u_1,u_2,\varepsilon) -(1-\delta) \geq F(u_1,u_2,\varepsilon)-(1-\underline{\delta})
\end{eqnarray}
\end{proof}
\begin{proof}[Proof of Lemma C.2:] By definition, the value of $F(u_1,u_2,\varepsilon)$ is a decreasing function of $\varepsilon$ and is bounded by $[0,1]$. Therefore,
$\lim_{\varepsilon \downarrow 0} F(u_1,u_2,\varepsilon)$ exists and moreover, 
$\lim_{\varepsilon \downarrow 0} F(u_1,u_2,\varepsilon) \leq F(u_1,u_2,0)$.

Next, we show that 
$\lim_{\varepsilon \downarrow 0} F(u_1,u_2,\varepsilon) \geq F(u_1,u_2,0)$. 
The optimization problem that defines $F(u_1,u_2,\varepsilon)$ implies that
for every $\varepsilon>0$, there exists $p_{\varepsilon} \in \Delta (\Gamma)$ that has countable number of elements in its support such that $\int \alpha (a^*) d p_{\varepsilon}(\alpha,b) \leq F(u_1,u_2,\varepsilon)+ \varepsilon$ and $\int u_1(\alpha,b) d p_{\varepsilon}(\alpha,b) \geq u_1(a^*,b^*)-\varepsilon$. 

According to Assumption \ref{Ass2}, there exists $a' \in A$ such that $u_1(a',b^*) > u_1(a^*,b^*)$. According to Assumption \ref{Ass1}, $b^*$ is player $2$'s strict best reply against $a^*$. This implies the existence of $\alpha^* \in \Delta (A)$ such that $\alpha^* (a^*) \neq 1$, $b^* \in \textrm{BR}_2(\alpha^*)$, and $u_1(\alpha^*,b^*) > u_1(a^*,b^*)$.  Let $\rho \equiv u_1(\alpha^*,b^*) - u_1(a^*,b^*)$. Since the support of $p_{\varepsilon}$ is countable, there exists $\alpha_{\varepsilon}^* \in \Delta (A)$ such that 
$\alpha_{\varepsilon}^* (a^*) \neq 1$, $b^* \in \textrm{BR}_2(\alpha_{\varepsilon}^*)$,  $u_1(\alpha_{\varepsilon}^*,b^*) - u_1(a^*,b^*) \geq \frac{\rho}{2}$, and $(\alpha_{\varepsilon}^*, b^*)$ does not belong to the support of $p_{\varepsilon}$. We construct probability measure  $p_{\varepsilon}' \in \Delta (\Gamma)$ according to:
\begin{itemize}
\item $p_{\varepsilon}'(\alpha_{\varepsilon}^*,b^*)\equiv \frac{2\varepsilon}{\rho+2\varepsilon}$. 
  \item $p_{\varepsilon}' (\alpha,b) \equiv \frac{\rho}{\rho+2 \varepsilon} p_{\varepsilon} (\alpha,b)$ for every $(\alpha,b)$ that belongs to the support of $p_{\varepsilon}$.
\end{itemize}
By construction, $\int u_1(\alpha,b) d p_{\varepsilon}'(\alpha,b) \geq u_1(a^*,b^*)$, and therefore,
\begin{equation}\label{A.9}
  \frac{2\varepsilon}{ \rho+2\varepsilon} +  \frac{\rho}{ \rho+2\varepsilon} \Big( F(u_1,u_2,\varepsilon)+ \varepsilon \Big)
   \geq \int \alpha (a^*) d p_{\varepsilon}'(\alpha,b) \geq F(u_1,u_2,0).
\end{equation}
This implies that
\begin{equation*}
  \lim_{\varepsilon \downarrow 0}   \Big\{  \frac{2\varepsilon}{ \rho+2\varepsilon} +  \frac{\rho}{ \rho+2\varepsilon} \Big( F(u_1,u_2,\varepsilon)+ \varepsilon \Big) \Big\}
  =\lim_{\varepsilon \downarrow 0}  F(u_1,u_2,\varepsilon) \geq F(u_1,u_2,0).
\end{equation*}
\end{proof}
\begin{proof}[Proof of Lemma C.3:] The inequality that $F^*(u_1,u_2) \geq F(u_1,u_2,0)$ is implied by the definitions of 
$F^*(u_1,u_2)$ and $F(u_1,u_2,0)$. In what follows, we show that
$F^*(u_1,u_2) \leq F(u_1,u_2,0)$. For every $\eta >0$, there exists $p_{\eta} \in \Delta (\Gamma)$ that has countable number of elements in its support such that $ \int \alpha (a^*) d p_{\eta}(\alpha,b) \leq F(u_1,u_2,0)+\eta$ and $\int u_1(\alpha,b) d p_{\eta}(\alpha,b) \geq u_1(a^*,b^*)$. 
Let $\Gamma_{\eta}$ be a countable subset of $\Gamma$ that contains the support of $p_{\eta}$. Consider the following minimization problem:
\begin{equation}\label{A.10}
F_{\eta} \equiv \min_{p \in \Delta (\Gamma_{\eta})}   \sum_{(\alpha,b) \in \Gamma_{\eta}} p (\alpha,b) \alpha (a^*),
\end{equation}
subject to
\begin{equation}\label{A.11}
    \sum_{(\alpha,b) \in \Gamma_{\eta}} p (\alpha,b) u_1(\alpha,b)  \geq u_1(a^*,b^*).
\end{equation}
By construction, $F_{\eta} \leq \int \alpha (a^*) d p_{\eta}(\alpha,b) \leq F(u_1,u_2,0)+\eta$. We show that $F_{\eta}$ can be attained via a distribution that contains at most two elements in its support. The Lagrangian of the minimization problem is:
\begin{equation}\label{A.12}
    \sum_{(\alpha,b) \in \Gamma_{\eta}} p (\alpha,b) \alpha (a^*) +\lambda \Big( \sum_{(\alpha,b) \in \Gamma_{\eta}} p (\alpha,b) u_1(\alpha,b) - u_1(a^*,b^*)\Big),
\end{equation}
where $\lambda$ is the Lagrange multiplier. If constraint (\ref{A.11}) is not binding, then the minimum is zero and is attained by a degenerate distribution. If constraint (\ref{A.11}) is binding, then for every pair of elements $(\alpha,b)$ and $(\alpha',b')$ in the support of the minimand $p_{\eta}^* \in \Delta (\Gamma_{\eta})$, 
\begin{equation}\label{A.13}
    \alpha (a^*) +\lambda u_1(\alpha,b) =  \alpha' (a^*) +\lambda u_1(\alpha',b).
\end{equation}
Label the elements in the support of $p_{\eta}^*$ as $\{(\alpha_i,b_i)\}_{i=1}^{+\infty}$. Equation (\ref{A.13}) implies that for every $\alpha_i (a^*)\neq \alpha_j(a^*)$,
\begin{equation}\label{A.14}
    \frac{u_1(\alpha_i,b)-u_1(\alpha_j,b)}{\alpha_i(a^*)-\alpha_j(a^*)}=-\frac{1}{\lambda}.
\end{equation}
Let 
\begin{equation*}
    \overline{u}_1 \equiv \sup_{(\alpha,b) \in \{(\alpha_i,b_i)\}_{i=1}^{+\infty}} u_1(\alpha,b), \quad \underline{u}_1 \equiv \inf_{(\alpha,b) \in \{(\alpha_i,b_i)\}_{i=1}^{+\infty}} u_1(\alpha,b),
\end{equation*}
\begin{equation*}
    \overline{q} \equiv \sup_{(\alpha,b) \in \{(\alpha_i,b_i)\}_{i=1}^{+\infty}} \alpha (a^*), \quad \textrm{and} \quad \underline{q} \equiv \inf_{(\alpha,b) \in \{(\alpha_i,b_i)\}_{i=1}^{+\infty}} \alpha (a^*).
\end{equation*}
Equation (\ref{A.14}) implies that
\begin{equation*}
    \frac{\overline{u}-\underline{u}}{\overline{q}-\underline{q}}=-\frac{1}{\lambda}.
\end{equation*}
Let $\gamma \in (0,1)$ be such that $\gamma \overline{u}_1 +(1-\gamma)\underline{u}_1=u_1(a^*,b^*)$. According to (\ref{A.14}), we have $\gamma \overline{q} +(1-\gamma) \underline{q}= F_{\eta}$.

Since $\Delta (A) \times B$ is compact, there exist $(\overline{\alpha},\overline{b})$ and
$(\underline{\alpha},\underline{b})$ which are limit points of set $\{(\alpha_i,b_i)\}_{i=1}^{+\infty}$
such that $u_1(\overline{\alpha},\overline{b})=\overline{u}_1$, $\overline{\alpha}(a^*)=\overline{q}$,
$u_1(\underline{\alpha},\underline{b})=\underline{u}_1$, and $\underline{\alpha}(a^*)=\underline{q}$. 
Since player $2$'s best reply correspondence is upper-hemi-continuous, $(\overline{\alpha},\overline{b}), (\underline{\alpha},\underline{b}) \in \Gamma$. Our analysis above suggests that there exists a distribution on $\Gamma_{\eta} \bigcup \{(\overline{\alpha},\overline{b}) , (\underline{\alpha},\underline{b})\}$ with at most two elements in its support that satisfies constraint (\ref{A.2}) and the value of the objective function (\ref{A.1}) is at most $F(u_1,u_2,0)+\eta$.

Take a decreasing sequence of positive real numbers $\{\eta_n \}_{n \in \mathbb{N}}$ such that $\lim_{n \rightarrow \infty} \eta_n=0$. For every $n \in \mathbb{N}$, there exists $p_n \in \Delta (\Gamma)$ with at most two elements in its support that satisfies constraint (\ref{A.2}) and the value of the objective function is at most $F(u_1,u_2,0)+\eta_n$. Since $\Big(\Delta (A_1) \times B \Big)^2$ is compact, there exists a converging subsequence $\{p_{k_n}\}_{n \in \mathbb{N}}$ such that its limit $p^*$ has at most two elements in its support, satisfies constraint (\ref{A.2}), and the value of the objective function is at most $F(u_1,u_2,0)$. This implies that
$F^*(u_1,u_2) \leq F(u_1,u_2,0)$.
\end{proof}
In the last step, we modify the above proof in order to establish Statement 2 of Theorem \ref{Theorem2}. 
Since $\overline{v}_1 = u_1(a^*,b^*)$, for every $\varepsilon>0$, there exists $\underline{\delta} \in (0,1)$ such that player $1$'s payoff in every equilibrium where $\delta > \underline{\delta}$ is no more than $u_1(a^*,b^*)+\varepsilon$.  
Let
\begin{equation}
    \mathcal{A}^{\varepsilon} \equiv \Big\{
    \alpha^* \in \Delta(A)
    \Big|
    \exists q \in \Delta (\Gamma) \textrm{ such that }
\alpha^*    = \int_{\alpha} \alpha d q
\textrm{ and } \Big| \int_{(\alpha,b)} u_1(\alpha,b) d q 
- u_1(a^*,b^*) \Big| \leq \varepsilon
    \Big\}.
\end{equation}
Lemma \ref{LA.1} implies that for every $\alpha' \notin \mathcal{A}^{\varepsilon}$, there exist $\eta>0$ and $\underline{\delta} \in (0,1)$ such that for every $\delta > \underline{\delta}$ and every
$(\sigma_1,\sigma_2) \in \textrm{NE}(\delta,\pi)$, we have:
\begin{equation}
  \Big|  G^{(\sigma_1,\sigma_2)}(a)- \alpha^* (a) \Big| > \eta \textrm{ for some } a \in A.
\end{equation}
The conclusion of Theorem \ref{Theorem2} is obtained since $\lim_{\varepsilon \rightarrow 0} \mathcal{A}^{\varepsilon}=\mathcal{A}$.

\section{Proof of Proposition 1}\label{secC}
First, suppose toward a contradiction that
$\{\alpha_1,\alpha_2,b_1,b_2,q\}$ solves (\ref{3.3}),
player $2$ has a strict incentive to play $b_1$ against $\alpha_1$, $\alpha_1$ does not attach probability $1$ to player $1$'s lowest action, and $q \neq 0$.  One can increase the probability of $\underline{a}$ in $\alpha_1$ and
 decrease the probability of other actions, after which player $1$'s expected payoff strictly increases and the probability of action $a^*$ strictly decreases. This contradicts the presumption that $\{\alpha_1,\alpha_2,b_1,b_2,q\}$ solves the constrained minimization problem. 
 
 Next, suppose $q>0$ and $\alpha_1$ is such that $|\textrm{BR}_2(\alpha_1)| \geq 2$, and there exists $b \in \textrm{BR}(\alpha_1)$ such that $u_1(\alpha_1,b)> u_1(\alpha_1,b_1)$. Then replace $b_1$ by $b$ in the constrained minimization problem, the value of $F^*$ remains unchanged but constraint (\ref{3.5}) becomes slack. This contradicts Lemma \ref{LB.1} that it is without loss of generality to focus on $\{\alpha_1,\alpha_2,b_1,b_2,q\}$  such that constraint (\ref{3.5}) binds.

\end{spacing}

\bibliographystyle{plainnat}
\bibliography{ref}
\end{document}